%
%
%
%
%
%
%
\documentclass[%
 aip,
 cha,%
 amsmath,amssymb,
reprint,%
]{revtex4-1}

\usepackage{graphicx}
\usepackage{dcolumn}
\usepackage{bm}
\usepackage{hyperref}
\usepackage{amsthm}
\usepackage{amsfonts}
\newtheorem{theorem}{Theorem}
\newtheorem{repeattheorem}{Theorem}
\newtheorem{proposition}{Proposition}
\newtheorem{assumption}{Assumption}
\newtheorem{definition}{Definition}
\newtheorem{lemma}{Lemma}
\newtheorem{corollary}{Corollary}

\begin{document}

\preprint{}

\title[Prediction uncertainty and optimal experimental design]{Prediction uncertainty and optimal experimental design\\ for learning dynamical systems}

\author{Benjamin Letham}
 \affiliation{Operations Research Center, Massachusetts Institute of Technology, Cambridge, MA 02139, USA}
 \email{bletham@alum.mit.edu}
 \altaffiliation[Present address: ]{Facebook, Menlo Park, CA 94025, USA}
\author{Portia A. Letham}
 \affiliation{Department of Chemical Engineering, Arizona State University, Tempe, AZ 85281, USA}
 \email{pletham@asu.edu}
\author{Cynthia Rudin}
 \affiliation{Department of Computer Science and Department of Electrical and Computer Engineering, Duke University, Durham, NC 27708, USA}
 \email{cynthia@cs.duke.edu}
\author{Edward P. Browne}
 \affiliation{Koch Institute for Integrative Cancer Research, Massachusetts Institute of Technology, Cambridge, MA 02139, USA}
 \email{ebrowne@broadinstitute.org}

\date{\today}

\begin{abstract}
Dynamical systems are frequently used to model biological systems. When these models are fit to data it is necessary to ascertain the uncertainty in the model fit. Here we present prediction deviation, a metric of uncertainty that determines the extent to which observed data have constrained the model's predictions. This is accomplished by solving an optimization problem that searches for a pair of models that each provide a good fit for the observed data, yet have maximally different predictions. We develop a method for estimating \textit{a priori} the impact that additional experiments would have on the prediction deviation, allowing the experimenter to design a set of experiments that would most reduce uncertainty. We use prediction deviation to assess uncertainty in a model of interferon-alpha inhibition of HIV infection, and to select a sequence of experiments that reduces this uncertainty. Finally we prove a theoretical result which shows that prediction deviation provides bounds on the trajectories of the underlying true model. These results show that prediction deviation is a meaningful metric of uncertainty that can be used for optimal experimental design.
\end{abstract}

\maketitle

\begin{quotation}
Nonlinear dynamical systems are used throughout systems biology to describe the dynamics of biomolecular interactions. These models typically have a number of unknown parameters, such as infection rates and decay rates, which are estimated by fitting the model to measurements from the physical system. Two important questions then arise: What is the uncertainty in the model predictions, and how can that uncertainty be reduced? We describe here a new approach for measuring uncertainty in model predictions, by searching for a pair of model parameters that both provide a good fit for the observed data, but make maximally different predictions. We further show how to estimate the impact on the uncertainty of a candidate experiment that has not yet been done, allowing the experimenter to determine beforehand if an experiment will be valuable. We use prediction deviation to analyze a model of HIV infection which can only be partially observed. With prediction deviation, and with appropriately selected experiments, we are able to provide bounds on the behavior of the unobserved quantities and gain insights into inhibition that are otherwise unavailable.
\end{quotation}

\section{Introduction}
Systems of nonlinear differential equations are used throughout biology to model the behavior of complex, dynamical systems. These models have proven particularly useful in systems biology for describing networks of biomolecular interactions \citep{Brackley10}. Often the utility of the model depends on being able to estimate a set of unknown parameters, which is typically done by collecting data from the physical system and finding the best-fit parameters. When inferring a dynamical system from data, there are two important questions that arise:
\begin{enumerate}
\item \textit{Uncertainty quantification}: Is the model sufficiently constrained by the data?
\item \textit{Optimal experimental design}: If not, what additional experiments would most reduce the remaining uncertainty?
\end{enumerate}
Uncertainty is often measured by constructing a confidence interval for each parameter estimate. We propose a different approach to the problem of uncertainty quantification, and then show that this approach leads naturally to an optimal experimental design strategy. Our fundamental hypothesis is that the purpose of fitting a model is to be able to use it to make predictions. In many situations the parameter values \textit{per se} are not of interest, rather the goal is to gain insight into the system's behavior. In these situations, the purpose of assessing model uncertainty is to determine if the model's predictions can be trusted.

This paper begins by developing \textit{prediction deviation}, a new measure of uncertainty on predicted behaviors. We then use prediction deviation to measure uncertainty in a partially observed model of HIV infection, where we found that after one experiment there remained substantial uncertainty in the behavior of the unobserved component. We then show that prediction deviation leads naturally to a way to measure \textit{experiment impact}, which is a maximum uncertainty on predicted behaviors if an additional experiment were to be conducted. This approach is used to determine a sequence of experiments that reduces uncertainty in the HIV infection model, and ultimately bounds the behavior of the unobserved component. Finally we provide a theoretical foundation for prediction deviation by showing that, under reasonable assumptions, it bounds the trajectory of the underlying true model.

\subsection{Confidence Intervals Do Not Measure Predictive Power}\label{sec:motivation}
Parameter confidence intervals are a poor way of determining if a nonlinear dynamical system's predictions are constrained by the observed data. 
Sensitive dependence means that tight confidence intervals do not imply constrained predictions. The classic Lorenz system provides an illustration of this phenomenon:
\begin{equation*}
\frac{dx}{dt} = \theta_1 (y-x), \quad \frac{dy}{dt} = x (\theta_2-z) - y, \quad \frac{dz}{dt} = xy - \theta_3 z.
\end{equation*}
Fig. \ref{fig:lorenz}(a) shows $x(t)$ data generated from the Lorenz system with parameters $\boldsymbol{\theta}^{\textrm{true}} = [7, 38, 5]$, initial conditions $x(0) = 10$, $y(0) = 20$, and $z(0) = 3$, and a small amount of normally distributed noise. 
The best-fit estimates for the parameters, $\boldsymbol{\theta}^*$, are very close to the true values $\boldsymbol{\theta}^{\textrm{true}}$, and have seemingly tight confidence intervals: $\theta^*_1 = 7.00$ $(6.51 - 7.49)$, $\theta^*_2 = 38.03$ $(36.08 - 40.28)$, and $\theta^*_3 = 5.00$ $(4.82 - 5.17)$, with 95\% simultaneous likelihood-based intervals in parentheses \citep{Raue09}. Suppose we wished to use these observed data with $y(0)=20$ to predict the behavior of the system when $y(0) = 7$, with all other factors staying constant. Do the tight confidence intervals allow for confidence in the model's predictions at this different initial condition? Fig. \ref{fig:lorenz}(b) shows that they do not. This figure compares the predictions made by the best-fit model $\boldsymbol{\theta}^*$ to those made by the model with parameters $\boldsymbol{\bar{\theta}} = [6.98, 38.12, 4.99]$. $\boldsymbol{\bar{\theta}}$ is well within the confidence intervals of $\boldsymbol{\theta}^*$, moreover the fit error of $\boldsymbol{\bar{\theta}}$ is within the 95\% confidence interval for the fit error of $\boldsymbol{\theta}^*$, meaning $\boldsymbol{\bar{\theta}}$ is also a good fit for the observed data. However, $\boldsymbol{\bar{\theta}}$ and $\boldsymbol{\theta}^*$ make entirely different predictions for the condition we wish to predict. The phase portraits in Figs. \ref{fig:lorenz}(c) and \ref{fig:lorenz}(d) show that this small change in the parameters is enough to send the trajectory to a different side of the attractor. The Lorenz system is a canonical example of sensitivity, but chaotic dynamics are not required to have tight confidence intervals with unconstrained predictions. For instance, this same result can be had any time a basin boundary lies within the confidence interval.

\begin{figure*}[]
\centering
\includegraphics[]{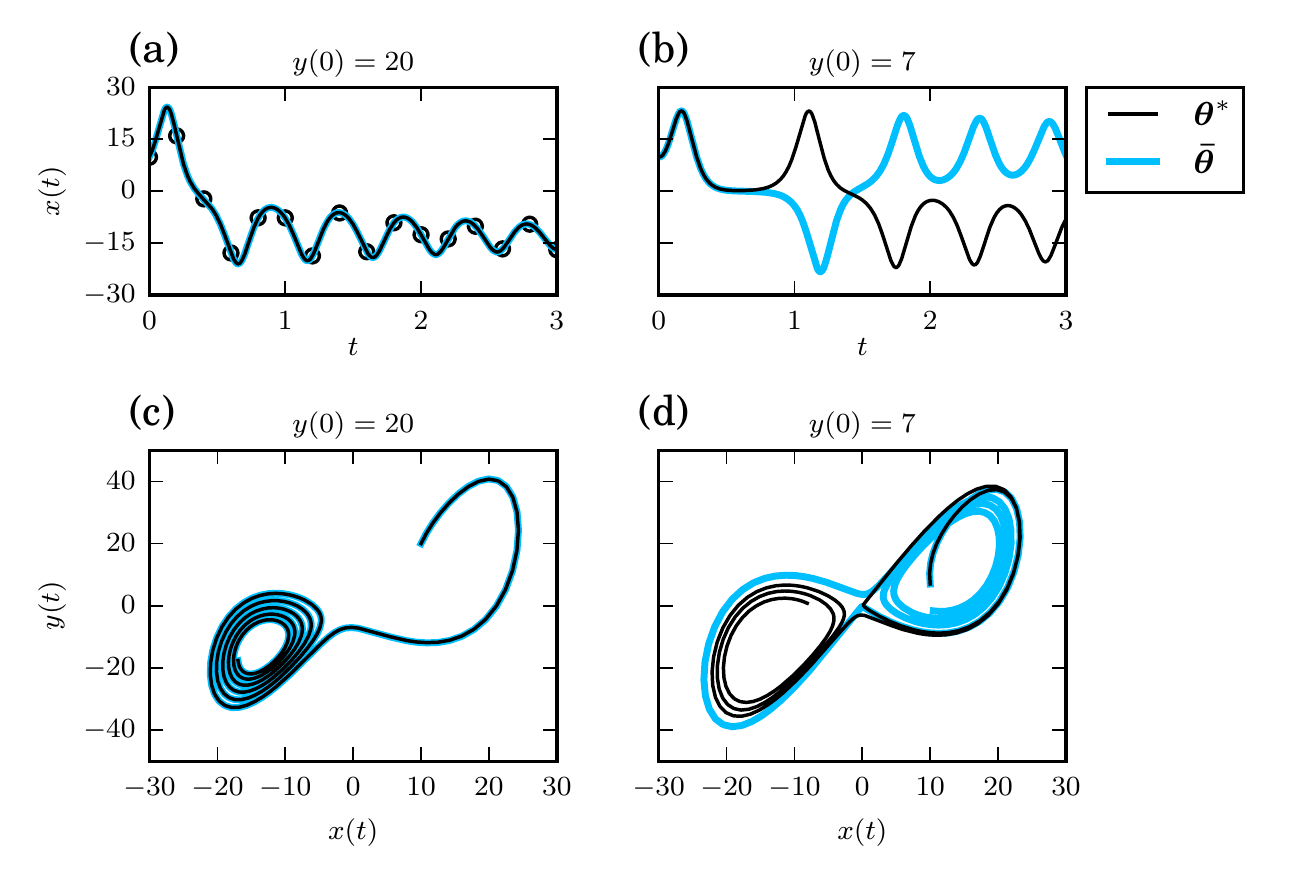}
\caption{(a) Circles indicate simulated data points from the Lorenz system, with the best-fit in black and the alternative model in blue. (b) Despite both models providing a good fit to the data in panel (a), they produce very different predictions on a different initial condition. (c, d) Phase portraits for the trajectories in panels (a) and (b).} \label{fig:lorenz}
\end{figure*}

Tight confidence intervals do not imply constrained predictions, and likewise wide confidence intervals do not imply unconstrained predictions. Parameters in nonlinear dynamical systems may be interrelated such that they individually have large confidence intervals, yet the predictions of interest are actually constrained. The following parameterization of the Lotka-Volterra predator-prey model illustrates this fact:
\begin{equation*}
\frac{dx}{dt} = \theta_1 \theta_3 x - \theta_2 \theta_3 x y,\quad \frac{dy}{dt} =  \theta_2 \theta_4 x y - \theta_1 \theta_4 y.
\end{equation*}
A symmetry in the parameters renders them all unidentifiable - they have infinite confidence intervals. Suppose we were able to observe $x(t)$ data and wished to use these data to predict the state $y(t)$. Fig. \ref{fig:lv} shows that despite the infinite confidence intervals, $x(t)$ data constrain predictions of $y(t)$. The data in Fig. \ref{fig:lv} were generated using $\boldsymbol{\theta}^{\textrm{true}} = [1, 0.05, 1, 1]$ and initial conditions $x(0) = y(0) = 10$, with standard normal noise. $\boldsymbol{\bar{\theta}}$ is the worst-case of how bad the prediction in $y(t)$ could be. Specifically, of all parameters that have fit error within the 95\% confidence interval of the fit error of the best-fit (that is, all parameters that provide a good fit of the data), $\boldsymbol{\bar{\theta}}$ is the one that maximized the squared difference between its prediction of $y(t)$ and that of the best-fit $\boldsymbol{\theta}^*$. Thus any model that fits the $x(t)$ observations will make a prediction on $y(t)$ that differs from the best-fit by no more than the difference seen in $\boldsymbol{\bar{\theta}}$.

\begin{figure*}[]
\centering
\includegraphics[]{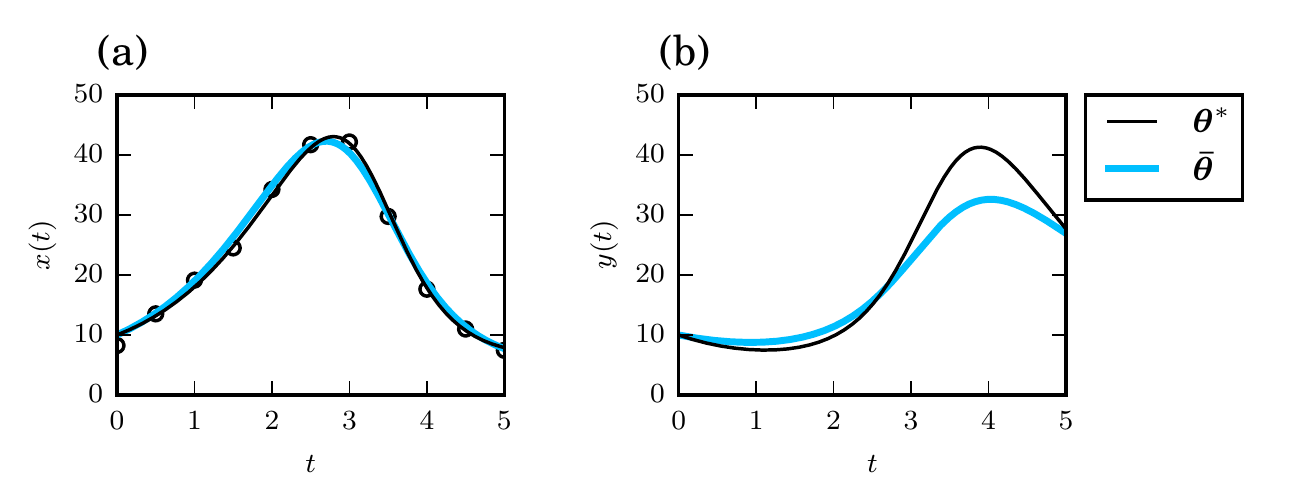}
\caption{(a) Circles indicate simulated data points from the Lotka-Volterra model, with the best-fit in black and in blue the model that maximized the difference in panel (b), subject to providing a good fit to these simulated data points. (b) Predictions from the best-fit and alternative models of the state $y(t)$ are constrained, despite unidentifiable parameters.} \label{fig:lv}
\end{figure*}

This model contains a structural unidentifiability, which could be identified and corrected by a reparameterization \citep{Bellu07, Chis11, Sedoglavic01}. It is also possible to have model parameters that are structurally identifiable but not \textit{practically} identifiable, given the noise in the collected data \citep{Raue09}. \citet{Gutenkunst07} show that models with parameters that cannot be well constrained by data are ubiquitous in systems biology, and conclude that ``modelers should focus on predictions rather than on parameters."

If the purpose of fitting the model to data is to ascertain the values of the parameters, then confidence intervals provide a useful quantification of uncertainty. However, if the purpose is to use the fitted model to make predictions about unobserved variables or unobserved conditions, then confidence intervals serve no purpose for dynamical systems. We propose putting aside the issue of measuring confidence intervals and instead directly measure the uncertainty in the quantity of interest: the predictions.

\section{Prediction Deviation as a Measure of Uncertainty}
Figs. \ref{fig:lorenz}(b) and \ref{fig:lv}(b) provide the motivation for our approach to measuring uncertainty. We consider a scenario, or set of scenarios, for which we are interested in predicting the system behavior. In Fig. \ref{fig:lorenz}(b) this was a different initial condition and in Fig. \ref{fig:lv}(b) it was an unobserved variable. We then pose the following question: Of all parameters that are a good fit to the observed data, what is the largest deviation in predicted behaviors for any pair? We call this deviation the \textit{prediction deviation}. A low prediction deviation, such as that in Fig. \ref{fig:lv}(b), means that the observed data have constrained the prediction of interest. A high prediction deviation, such as that in Fig. \ref{fig:lorenz}(b), means that the observed data have not constrained the prediction of interest.

\subsection{Parameter Estimation}
We must first introduce notation to make the idea of prediction deviation precise. We suppose that we are learning a system of ordinary differential equations with state variables $\mathbf{x}(t)$, unknown parameters $\boldsymbol{\theta}$, and known external factors $\boldsymbol{\nu}$:
\begin{equation}\label{eq:model}
\frac{d\mathbf{x}}{dt} = \mathbf{f}\left(\mathbf{x},t; \boldsymbol{\theta}, \boldsymbol{\nu} \right).
\end{equation}
If the initial conditions are known then they are included in $\boldsymbol{\nu}$, and if unknown in $\boldsymbol{\theta}$.

We now provide notation for the observed data and the data fitting problem. Let $\mathcal{P}_j = (I_j,T_j,\boldsymbol{\nu}^j)$ represent a particular \textit{experiment}, with $I_j$ the set of state variables that are observed, $T_j = \{T_{i,j} : i \in I_j\}$ the sets of time points at which these observations are made for each state variable, and $\boldsymbol{\nu}^j$ the external factors. We suppose that a total of $J$ experiments have been performed, resulting in observed data $\tilde{x}^j_i(t)$, for $j =1,\ldots,J, i \in I_j,$ and $t \in T_{i,j}$. We denote the complete set of observed experiments as $\mathcal{P} = \{\mathcal{P}_1,\ldots,\mathcal{P}_J\}$ and the complete set of observed data as $\mathbf{\tilde{x}}$.

The unknown parameters $\boldsymbol{\theta}$ are typically estimated from the observed data $\mathbf{\tilde{x}}$ by minimizing the weighted squared error
\begin{equation}\label{eq:zfit}
z_{\textrm{fit}}(\boldsymbol{\theta}; \mathcal{P}, \mathbf{\tilde{x}}) := \sum_{j=1}^J \sum_{i \in I_j} \sum_{t \in T_{i,j}} \left( \frac{x_i(t;\boldsymbol{\theta},\boldsymbol{\nu}^j) - \tilde{x}^j_i(t)}{\sigma_{ijt}} \right)^2,
\end{equation}
where $x_i(t;\boldsymbol{\theta},\boldsymbol{\nu}^j)$ is obtained by integrating (\ref{eq:model}) and $\sigma_{ijt}^2$ is the noise variance. The best-fit parameters $\boldsymbol{\theta}^*$ are the solution to least squares problem
\begin{equation}
\underset{\boldsymbol{\theta}}{\textrm{minimize}} \quad z_{\textrm{fit}}(\boldsymbol{\theta}; \mathcal{P}, \mathbf{\tilde{x}}).
\end{equation}

\subsection{Prediction Deviation}
To measure prediction deviation, we wish to search over the set of all models that provide a good fit to the observed data. We say that a model $\boldsymbol{\theta}$ is a ``good fit" to the observed data if its fit error $z_{\textrm{fit}}(\boldsymbol{\theta}; \mathcal{P}, \mathbf{\tilde{x}})$ is not too much worse than that of the best fit, $\boldsymbol{\theta}^*$. Specifically, we measure a 95\% confidence interval for $z_{\textrm{fit}}(\boldsymbol{\theta}^*; \mathcal{P}, \mathbf{\tilde{x}})$, which we denote as $[z^{*}_{l}, z^{*}_u]$. In the event of normally distributed observation noise a parametric estimate for the interval can be obtained using the $\chi^2$ distribution, or, as we do here, a nonparametric confidence interval can be obtained with the bootstrap \citep{Efron86}. The prediction deviation is defined as the maximum difference on a prediction problem between any pair of models that both have fit error within the 95\% confidence interval of the best-fit error.

The prediction problems for which the prediction deviation is to be measured are defined in the same way as the experiment according to which data were collected. We call $\mathcal{Y}_\ell = (I_\ell,T_\ell,\boldsymbol{\nu}^\ell)$ a \textit{prediction problem}, where as before $I_\ell$ is the set of state variables to be predicted in problem $\ell$, $T_\ell = \{T_{i,\ell} : i \in I_\ell\}$ are the sets of time points at which these predictions are made for each state variable, and $\boldsymbol{\nu}^\ell$ are the external factors. Let $\mathcal{Y} = \{\mathcal{Y}_1,\ldots,\mathcal{Y}_L\}$ be the full collection of variables and experiments of interest for prediction. The squared difference between $\boldsymbol{\theta}^1$ and $\boldsymbol{\theta}^2$ on the prediction problems is
\begin{align}\nonumber
z_{\textrm{dev}}(\boldsymbol{\theta}^1,& \boldsymbol{\theta}^2;\mathcal{Y}) \\
&:= \sum_{\ell=1}^L \sum_{i \in I_\ell} \sum_{t \in T_{i,\ell}} \left( \frac{x_i(t;\boldsymbol{\theta}^1,\boldsymbol{\nu}^\ell) - x_i(t;\boldsymbol{\theta}^2,\boldsymbol{\nu}^\ell)}{\sigma_{ilt}} \right)^2.
\end{align}
As before, $\sigma_{ilt}^2$ is an estimate of the noise level for that measurement, which is important primarily for combining multiple measurements of possibly different scales into one metric. Prediction deviation can now be framed as an optimization problem:
\begin{subequations}\label{prob:pred}
\begin{align}\label{eq:pred}
\underset{\boldsymbol{\theta}^1,\boldsymbol{\theta}^2}{\textrm{maximize}} \quad &z_{\textrm{dev}}(\boldsymbol{\theta}^1,\boldsymbol{\theta}^2;\mathcal{Y}) \\\label{eq:con1}
\textrm{subject to}\quad
&z_{\textrm{fit}}(\boldsymbol{\theta}^1; \mathcal{P}, \mathbf{\tilde{x}}) \leq z^{*}_u, \\\label{eq:con2}
&z_{\textrm{fit}}(\boldsymbol{\theta}^2; \mathcal{P}, \mathbf{\tilde{x}}) \leq z^{*}_u.
\end{align}
\end{subequations}
The objective (\ref{eq:pred}) searches for a pair of models that maximize the difference in predictions on the prediction problems, while the constraints (\ref{eq:con1}) and (\ref{eq:con2}) limit the search to those models that provide a good explanation for the observed data $\mathbf{\tilde{x}}$. Let  $\boldsymbol{\bar{\theta}}^1$ and $\boldsymbol{\bar{\theta}}^2$ be the maximizers of problem (\ref{prob:pred}). Then, the optimal objective value $z_{\textrm{dev}}(\boldsymbol{\bar{\theta}}^1,\boldsymbol{\bar{\theta}}^2;\mathcal{Y})$ is the prediction deviation and can be obtained by solving this single, constrained maximization problem. We show in the supplemental material \cite{Supplement} results for the Lorenz system from Fig. \ref{fig:lorenz}, and now discuss how prediction deviation can be used to understand and reduce uncertainty in a viral infection model.

\section{Uncertainty in a Model of HIV Infection}
\subsection{The Model and Data}
We now use prediction deviation to assess how well observed data constrain the model predictions of an unobserved component in a model of the innate immune response to HIV infection\citep{Browne15}. The model describes the dynamics of how interferon-alpha (IFN$\alpha$) protects CD4 T cells from infection by HIV. IFN$\alpha$ is a signaling protein that endows CD4 T cells with protection from HIV by upregulating genes that disrupt viral replication. In the model, CD4 T cells (C) are infected by HIV (H) and become infected cells (CH) that produce additional viruses. Exposure to IFN$\alpha$ (I) induces a refractory state in both uninfected and infected cells (CI and CHI respectively) which if uninfected are protected from infection, and if infected no longer produce additional viruses. The refractory state is reversible and CI and CHI cells eventually revert to their original state, C and CH respectively. The dynamical system that describes the interactions of these quantities is:
\begin{align*}
\frac{d\textrm{C}(t)}{dt} &= \theta_1 \textrm{C}(t) + \theta_3 \textrm{CI}(t) - \theta_2 \textrm{C}(t) \frac{\textrm{I}(t)}{\theta_8+\textrm{I}(t)}\\
&\quad- \theta_5 \textrm{C}(t) \textrm{H}(t),\\
\frac{d\textrm{CI}(t)}{dt} &= (\theta_1-\theta_3 ) \textrm{CI}(t)  + \theta_2 \textrm{C}(t) \frac{\textrm{I}(t)}{\theta_8+\textrm{I}(t)},\\
\frac{d\textrm{CH}(t)}{dt} &= (\theta_1- \theta_4) \textrm{CH}(t) + \theta_5 \textrm{C}(t) \textrm{H}(t)  \\
&\quad- \theta_2  \textrm{CH}(t) \frac{\textrm{I}(t)}{\theta_8+\textrm{I}(t)} + \theta_3 \textrm{CHI}(t),\\
\frac{d\textrm{CHI}(t)}{dt} &= (\theta_1-\theta_3-\theta_4) \textrm{CHI}(t) + \theta_2 \textrm{CH}(t) \frac{\textrm{I(t)}}{\theta_8+\textrm{I}(t)}, \\
\frac{d\textrm{H}(t)}{dt} &= \theta_6 \textrm{CH}(t) - \theta_7 \textrm{H}(t).
\end{align*}

We use here tissue culture data collected by \citet{Browne15}, who provide full details of the experimental methodology. In short, varying levels of IFN$\alpha$ were added to tissue cultures with CD4 T cells. After allowing the cells to incubate with the IFN$\alpha$ for six hours, HIV was added to the culture for one hour and then washed out. The total number of uninfected ($\textrm{C} + \textrm{CI}$) and infected ($\textrm{CH} + \textrm{CHI}$) cells, along with the viral count ($\textrm{H}$) were measured with four replicates every 24 hours, for 3 days. This experiment was done separately for a total of 7 initial IFN$\alpha$ levels: 0, 0.002, 0.02, 0.2, 2, 20, and 200 ng/mL. In this tissue culture the IFN$\alpha$ activity remained constant and so $\textrm{I}(t)$ was a known, external factor. Details of model fitting and prediction deviation implementation are given in Appendix \ref{Appendix:Imp}.

\subsection{Prediction Deviation after One Experiment}
To illustrate how prediction deviation changes with additional experiments and how it can be used for experiment selection, we label each combination of variables and IFN$\alpha$ level as a separate experiment. For example, $\textrm{C}+\textrm{CI}$ measured at I=0.002 ng/mL defines one experiment, and $\textrm{H}$ measured at I=2.0 ng/mL is another. There are a total of 21 such experiments for which data were collected. We begin by using data from only one of these experiments, and then consider the problems of determining uncertainty in model fit, and deciding which additional experiments to add in order to reduce prediction uncertainty.

The purpose of defining the model and collecting experimental data is to understand the dynamics of how IFN$\alpha$ provides protection to CD4 T cells during HIV infection. The experimental data themselves do not explicitly show the interaction of IFN$\alpha$ and CD4 T cells inasumch as only the sum $\textrm{C} + \textrm{CI}$ can be observed. The natural prediction problem is to then try to predict the $\textrm{CI}$ timecourse, at the same observation times as the $\textrm{C} + \textrm{CI}$ data. We begin with just one experiment, and let $\mathcal{P}$ be the experiment corresponding to $\textrm{C}+\textrm{CI}$ measured at I=0.002 ng/mL. Let $\mathcal{Y}$ be the corresponding prediction problem, $\textrm{CI}$ at I=0.002 ng/mL.

Using prediction deviation, we can determine if the observations of $\textrm{C}+\textrm{CI}$ at I=0.002 ng/mL constrain the predictions of $\textrm{CI}$ at I=0.002 ng/mL, and Fig. \ref{fig:pred_init} shows that they do not. In particular, Fig. \ref{fig:pred_init}(a) shows that the two models that maximize prediction deviation both provide a good fit for the observed data, while Fig. \ref{fig:pred_init}(b) shows that they provide widely diverging predictions about the $\textrm{CI}$ timecourse: One of the models suggests that nearly all of the CD4 T cells are refractory, while the other one suggests that nearly none of them are. These observed data do not in any way increase our understanding of the IFN$\alpha$ dynamics.

\begin{figure*}[]
\centering
\includegraphics[]{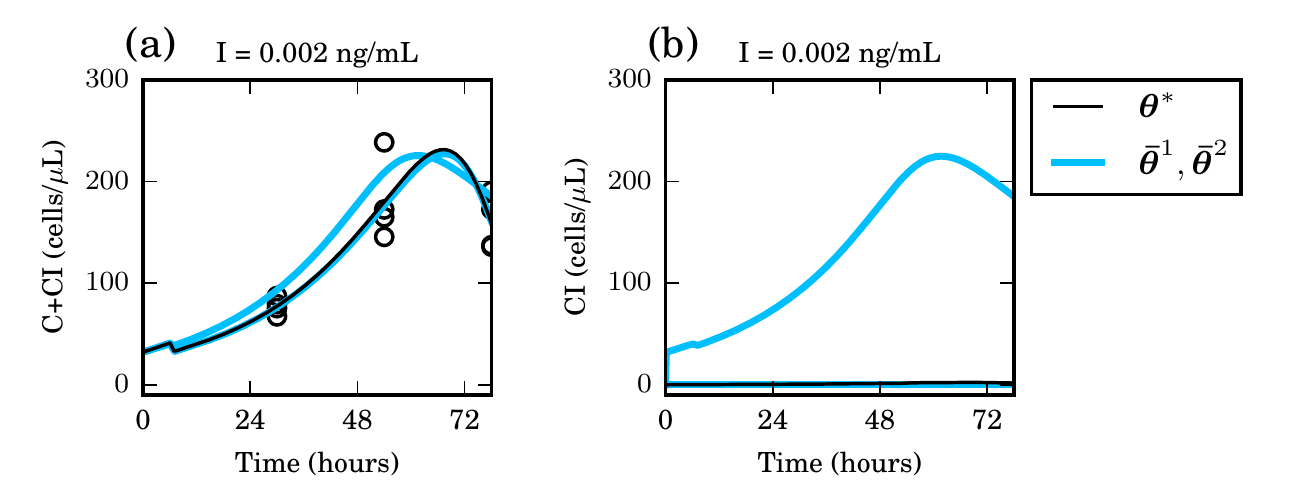}
\caption{(a) Circles indicate observed data for total uninfected CD4 T cells. In black is the best-fit model, and in blue are the two prediction deviation models, which also provide a good fit to the data. (b) The prediction deviation models provide widely differing predictions about the number of uninfected cells that are refractory, ranging from nearly none to nearly all.} \label{fig:pred_init}
\end{figure*}

\section{Optimal Experiment Design}
Knowing that the predictions of $\textrm{CI}$ are entirely unconstrained, the question that naturally follows is to determine which of the remaining 20 experiments should be done next in order to maximally reduce the prediction deviation. More generally, we wish to predict the impact that a particular candidate experiment or set of experiments $\mathcal{P}{'}$ will have on the prediction deviation, given that we have already completed experiments $\mathcal{P}$, with $\mathcal{P}{'} \cap \mathcal{P} = \emptyset$.

Fig. \ref{fig:pred_others} provides some insight into this problem. This figure shows the predictions of the best-fit and prediction deviation models from Fig. \ref{fig:pred_init} on two of the candidate experiments, $\textrm{C + CI}$ at I levels of 0.0 and 200.0 ng/mL. On the candidate experiment in Fig. \ref{fig:pred_others}(a), the prediction deviation models are very different. Suppose observations were collected for this experiment and then prediction deviation were recomputed using both these observations and the original set. After collecting data, at least one of the prediction deviation models in Fig. \ref{fig:pred_others}(a) would no longer be a good fit. We cannot know \textit{a priori} if the observations will lie close to one of the models and thereby disqualify the other, or if they will lie in the middle, disqualifying both, but at least one model will not be a good fit for the new observations.

Fig. \ref{fig:pred_others}(b) shows the alternative situation where the prediction deviation models do not disagree on the candidate experiment. Were this experiment to be done, it is possible that the observations would disqualify both prediction deviation models and there would be a reduction in uncertainty. However, it is also possible for the observations to be such that both models remain feasible, meaning there is no reduction in uncertainty.

\begin{figure*}[]
\centering
\includegraphics[]{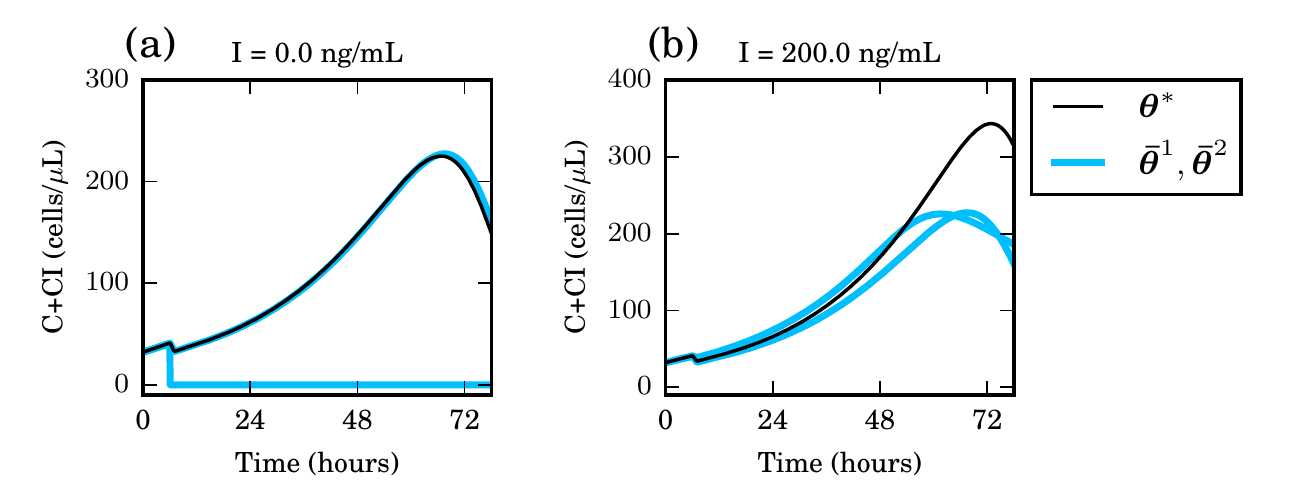}
\caption{Trajectories from the same best-fit (black) and prediction deviation (blue) models as Fig. \ref{fig:pred_init}, for two candidate experiments. (a) Observations from this experiment would disqualify at least one of the prediction deviation models. (b) Both prediction deviation models might remain feasible after this experiment.} \label{fig:pred_others}
\end{figure*}

The experiment in Fig. \ref{fig:pred_others}(a) seems like a good choice for reducing uncertainty, however having a large deviation on the candidate experiment $\mathcal{P}{'}$ does not necessarily mean that the deviation on the prediction problem of interest, in this case CI at I=0.002 ng/mL, will actually be reduced. Certainly that pair of prediction deviation models will no longer satisfy both constraints (\ref{eq:con1}) and (\ref{eq:con2}), however there may exist yet another pair of models that do not disagree on $\mathcal{P}{'}$ but produce the same prediction deviation on $\mathcal{Y}$. A powerful property of prediction deviation as a measure of uncertainty is that we actually can determine if this is the case.

\subsection{Estimating Experiment Impact}\label{sec:imp}
Collecting observations from experiment $\mathcal{P}{'}$ would change the prediction deviation by requiring the prediction deviation models to be a good fit for the new observations. In essence, there would be two new constraints that must be satisfied:
\begin{align*}
z_{\textrm{fit}}(\boldsymbol{\bar{\theta}}^1; \mathcal{P}{'}, \mathbf{\tilde{x}}{'}) &\leq \eta \quad \textrm{and}\\
z_{\textrm{fit}}(\boldsymbol{\bar{\theta}}^2; \mathcal{P}{'}, \mathbf{\tilde{x}}{'}) &\leq \eta,
\end{align*}
for some $\eta$, where $\mathbf{\tilde{x}}{'}$ are the new observations. In Appendix \ref{Appendix:Imp} we show that these constraints imply
\begin{equation}\label{eq:con3}
z_{\textrm{dev}}(\boldsymbol{\bar{\theta}}^1,\boldsymbol{\bar{\theta}}^2;\mathcal{P}{'}) \leq 4\eta,
\end{equation}
which allows us to get some idea of the impact these constraints would have even without knowledge of $\mathbf{\tilde{x}}{'}$. The essence of this result is that if the prediction deviation models are a good fit for the new data, they must have close trajectories on the new data. We are unable to restrict the prediction deviation models to be a good fit for the new data until we have collected the new data. We can, however, restrict the prediction deviation models to have close trajectories on the candidate experiment, thus estimating the impact that the candidate experiment would have on the prediction deviation. This is done by solving the prediction deviation problem with the added constraint (\ref{eq:con3}), which we call the \textit{experiment impact problem}:
\begin{subequations}\label{prob:exp}
\begin{align}
\underset{\boldsymbol{\theta}^1,\boldsymbol{\theta}^2}{\textrm{maximize}} \quad & z_{\textrm{dev}}(\boldsymbol{\theta}^1,\boldsymbol{\theta}^2;\mathcal{Y})\\
\textrm{subject to}\quad
&z_{\textrm{fit}}(\boldsymbol{\theta}^1; \mathcal{P}, \mathbf{\tilde{x}}) \leq z^{*}_u, \\
&z_{\textrm{fit}}(\boldsymbol{\theta}^2; \mathcal{P}, \mathbf{\tilde{x}}) \leq z^{*}_u, \\\label{eq:added_con}
&z_{\textrm{dev}}(\boldsymbol{\theta}^1,\boldsymbol{\theta}^2;\mathcal{P}{'}) \leq \eta.
\end{align}
\end{subequations}
This problem is identical to problem (\ref{prob:pred}) used to find the prediction deviation, with the added constraint (\ref{eq:added_con}). Model pairs like that in Fig. \ref{fig:pred_init}(a) will not be feasible solutions to this problem, inasmuch as they violate (\ref{eq:added_con}). If there does exist a different pair that produces close trajectories on $\mathcal{P}{'}$ but still has a large deviation on $\mathcal{Y}$, this optimization problem will find that pair. We denote the solutions to this optimization problem as $\boldsymbol{\hat{\theta}}^1$ and $\boldsymbol{\hat{\theta}}^2$, and call the optimal objective value $z_{\textrm{dev}}(\boldsymbol{\hat{\theta}}^1,\boldsymbol{\hat{\theta}}^2;\mathcal{Y})$ the \textit{estimated experiment impact}.

Of all possible outcomes of $\mathcal{P}{'}$, the outcome that reduces uncertainty in $\mathcal{Y}$ the least is if the observations follow the trajectories of $\boldsymbol{\hat{\theta}}^1$ and $\boldsymbol{\hat{\theta}}^2$. In this sense the predicted experiment impact is a worst-case reduction of uncertainty, and we can expect that $\mathcal{P}{'}$ will reduce the prediction deviation \textit{at least} as much as $z_{\textrm{dev}}(\boldsymbol{\hat{\theta}}^1,\boldsymbol{\hat{\theta}}^2;\mathcal{Y})$, subject to the closeness requirement $\eta$ being appropriate. Appendix \ref{Appendix:Imp} describes how $\eta$ can be chosen.

\section{Reducing Uncertainty of IFN$\alpha$ Dynamics}
We now continue the results on uncertainty in IFN$\alpha$ dynamics and use the predicted experiment impact to find additional experiments that reduce the uncertainty seen in Fig. \ref{fig:pred_init}(b). There are 20 candidate experiments consisting of different component measurements and varying IFN$\alpha$ levels. The estimated experiment impact optimization problem, (\ref{prob:exp}), was solved for each of these candidate experiments, and results for the experiment that predicted the largest reduction of uncertainty, $\textrm{C} + \textrm{CI}$ at I=0.0 ng/mL, are shown in Fig. \ref{fig:imp}. Fig. \ref{fig:imp}(a) shows the predicted experiment impact models on the candidate experiment, which are forced to have close trajectories. Fig. \ref{fig:imp}(b) shows that for the prediction problem there is a substantial reduction in uncertainty by requiring the models to produce close trajectories on the candidate experiment - this is the estimated experiment impact. The actual experiment impact is shown in Figs. \ref{fig:imp}(c) and \ref{fig:imp}(d): in Fig. \ref{fig:imp}(c) the actual observations, and in Fig. \ref{fig:imp}(d) the prediction deviation after including those observations. The actual reduction in prediction deviation was very close to that predicted by the estimated experiment impact in Fig. \ref{fig:imp}(b).

\begin{figure*}[]
\centering
\includegraphics[]{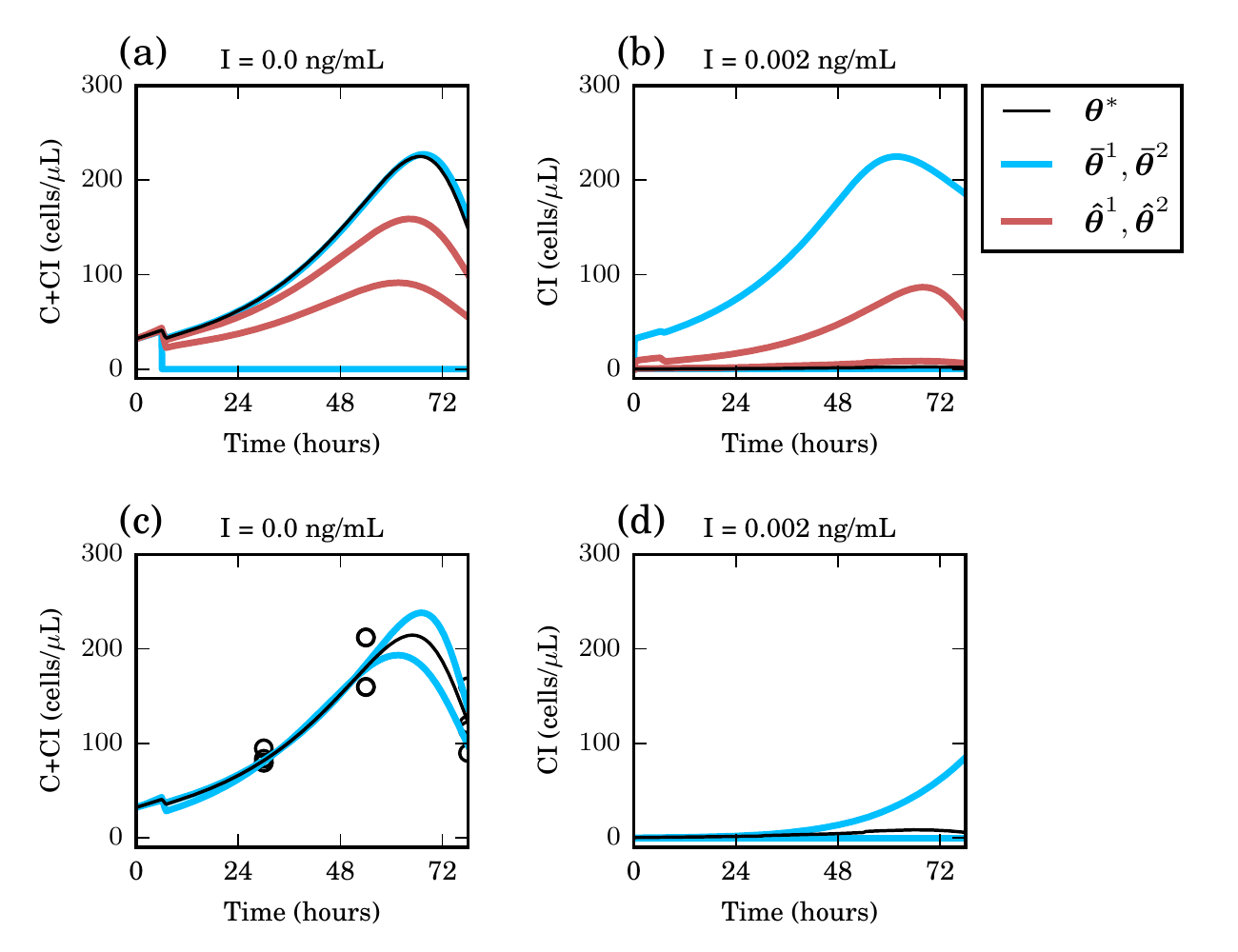}
\caption{(a, b) The expected experiment impact for a candidate experiment (a) on the prediction problem (b). In black and blue are the best-fit and prediction deviation models respectively, as in Figs. \ref{fig:pred_others}(a) and \ref{fig:pred_init}(b). In red are the expected experiment impact models, which predict a substantial reduction in uncertainty from this experiment. (c, d) The corresponding figures after adding observations from the candidate experiment in (a). In (c), the updated best-fit and prediction deviation models after adding the data (circles, with overlapping data shown side-by-side). In (d), the updated prediction deviation, reduced from (b) by the new observations.} \label{fig:imp}
\end{figure*}

The estimated experiment impact problem predicted a significant reduction in uncertainty from only one of the 20 candidate experiments. Fig. \ref{fig:expmt_imp} shows the results of separately adding each of the 20 candidates, and the $\textrm{C} + \textrm{CI}$ at I=0.0 ng/mL experiment of Fig. \ref{fig:imp} provided by far the largest reduction of uncertainty. The other two experiments at I=0.0 ng/mL provided a moderate reduction in uncertainty, while the remaining 17 experiments provided no reduction in uncertainty. Some experiments actually increased the uncertainty, by increasing the amount of noise in the fitting.

\begin{figure*}[]
\centering
\includegraphics[]{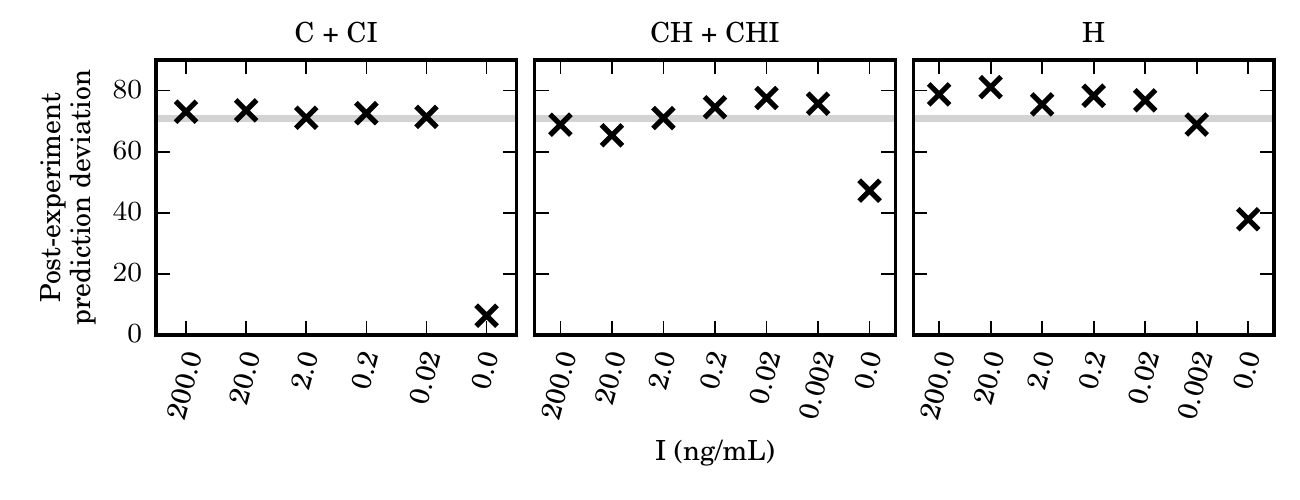}
\caption{Markers show the prediction deviation measured after including observations from each of the 20 candidate experiments. The horizontal line shows the prediction deviation prior to incorporating those observations, from Fig. \ref{fig:pred_init}. The experiment that most reduced uncertainty was that from Fig. \ref{fig:imp}.} \label{fig:expmt_imp}
\end{figure*}

We denote the prediction deviation models after including data from experiment $\mathcal{P}{'}$ as ${\boldsymbol{\bar{\theta}}^1}{'}$ and ${\boldsymbol{\bar{\theta}}^2}{'}$. Fig. \ref{fig:imp_perf}(a) compares the estimated experiment impact, $z_{\textrm{dev}}(\boldsymbol{\hat{\theta}}^1,\boldsymbol{\hat{\theta}}^2;\mathcal{Y})$, to the actual impact of each candidate experiment, $z_{\textrm{dev}}({\boldsymbol{\bar{\theta}}^1}{'},{\boldsymbol{\bar{\theta}}^2}{'};\mathcal{Y})$. As already seen in Fig. \ref{fig:imp}, for the one candidate that in actuality significantly reduced uncertainty, the predicted impact was very close to the actual impact. There were two experiments that provided a moderate reduction in uncertainty which was not matched by the estimated experiment impact. For the remaining 17 experiments, solving the estimated experiment impact problem correctly predicted that these experiments would not reduce uncertainty. Because it comes from adding a constraint to the prediction deviation problem, the estimated experiment impact problem cannot predict an increase in uncertainty, rather it can only predict that uncertainty will not decrease. Thus in Fig. \ref{fig:imp_perf}(a) the estimated experiment impacts for the 17 ineffectual candidates are very close to the previously measured prediction deviation.

The two experiments with a moderate reduction in uncertainty which was not predicted give insight into how the estimated experiment impact problem works. Fig. \ref{fig:exp7} shows the pre-experiment and post-experiment prediction deviation models, along with the expected experiment impact models, for one of these two experiments. Estimated experiment impact is a worst-case analysis, and for these two experiments the worst-case models did not reduce uncertainty while the post-experiment models did. Each of these experiments had a potential outcome, consistent with the observed data, which would not have reduced uncertainty. Fig. \ref{fig:exp7}(a) shows this worst-case potential outcome for one of the experiments. The actual data did not follow these worst-case trajectories, and in fact were able to moderately reduce uncertainty. Importantly, there were no experiments for which the estimated experiment impact indicated a reduction of uncertainty where in reality there was none. Because estimated experiment impact is a worst-case analysis, if the model is correct, this type of error will not occur and we will not do experiments that end up not reducing uncertainty.

The fact that 17 of the 20 experiments produced no reduction of uncertainty could not have been known without solving the estimated experiment impact problem. In particular, measuring the uncertainty in the candidate experiments themselves, as in Fig. \ref{fig:pred_others}, could not predict that all of these experiments would have no impact. Fig. \ref{fig:imp_perf}(b) compares the deviation on the candidate experiments, $z_{\textrm{dev}}(\boldsymbol{\bar{\theta}}^1,\boldsymbol{\bar{\theta}}^2;\mathcal{P}{'})$, to the actual experiment impact $z_{\textrm{dev}}({\boldsymbol{\bar{\theta}}^1}{'},{\boldsymbol{\bar{\theta}}^2}{'};\mathcal{Y})$. The two candidates with the highest pre-experiment deviation did not actually reduce uncertainty at all. Fig. \ref{fig:imp_perf}(b) shows that the uncertainty in the candidate experiment does not at all predict the impact that the candidate will have in the uncertainty of the prediction problem.

\begin{figure*}[]
\centering
\includegraphics[]{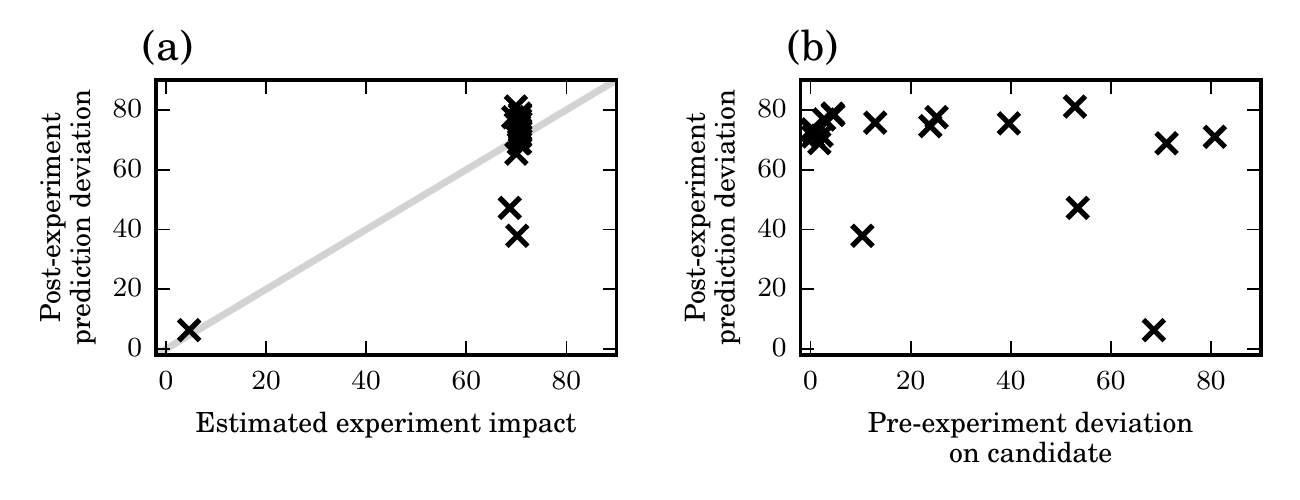}
\caption{(a) Markers show for each candidate experiment the estimated experiment impact compared to the actual prediction deviation measured after including the observations from that candidate. The gray line indicates where the estimate matches the actual outcome. (b) For each candidate experiment, the deviation of the prediction deviation models on that candidate experiment (see Fig. \ref{fig:pred_others}) compared to the prediction deviation measured after including the observations from that candidate. Candidate deviation does not provide a good prediction of experiment impact.} \label{fig:imp_perf}
\end{figure*}

\begin{figure*}[]
\centering
\includegraphics[]{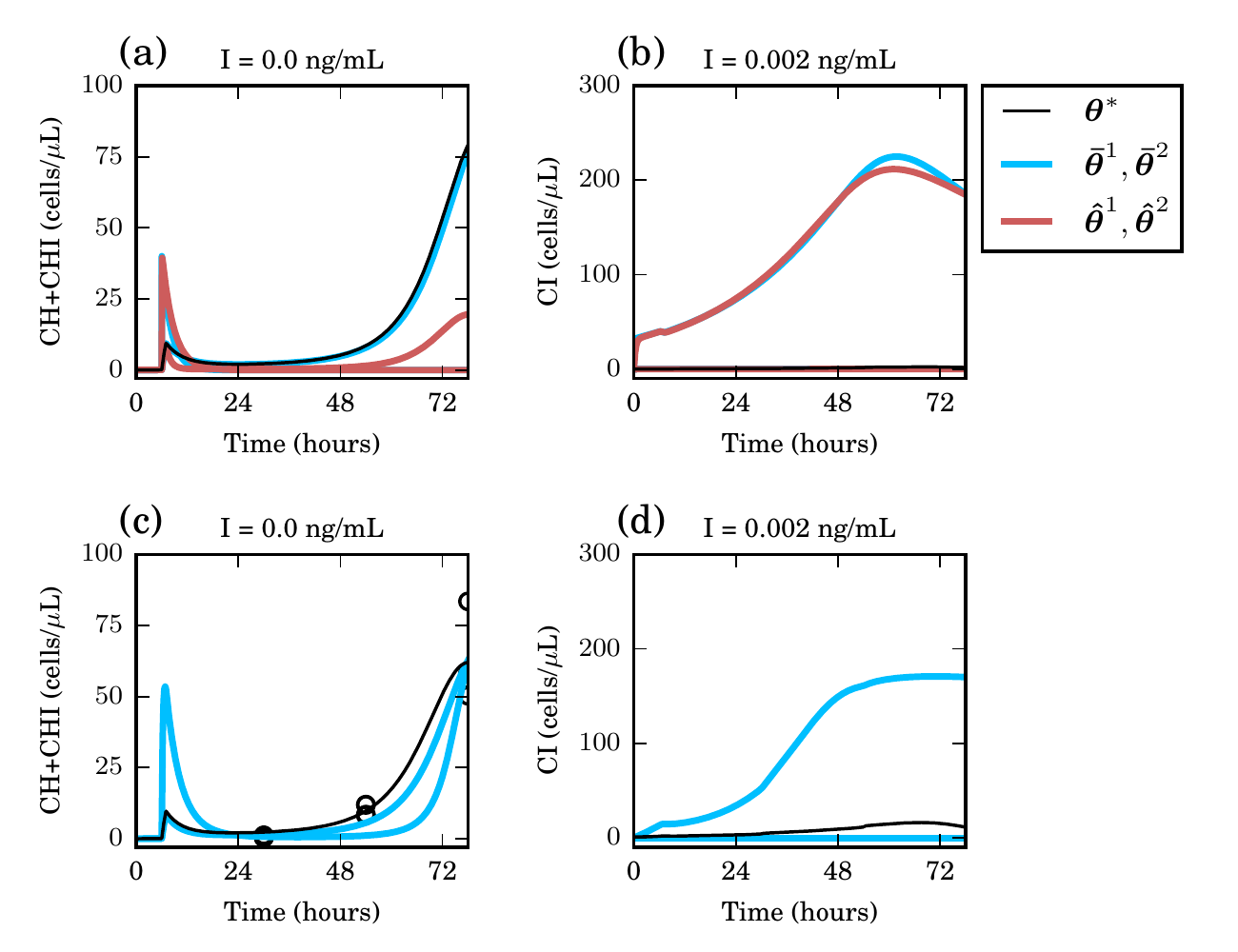}
\caption{(a, b) The expected experiment impact for a candidate experiment (a) on the prediction problem (b). In black and blue are the best-fit and prediction deviation models respectively. The expected experiment impact models (red) show a possible outcome of the experiment that does not reduce uncertainty. (c, d) The corresponding figures after adding observations from the candidate experiment in (a). The actual data from the experiment (c) were not the worst-case outcome found by the expected experiment impact problem in (a), and actually produced a moderate reduction in uncertainty (d).} \label{fig:exp7}
\end{figure*}

Fig. \ref{fig:seq_exps} shows the outcome of using the expected experiment impact in a sequential experimentation setting. Starting from the data in Fig. \ref{fig:pred_init}(a), we sequentially added in the data from the candidate experiment whose estimated experiment impact predicted the largest reduction in uncertainty. Each time after adding data from a candidate, we recomputed the prediction deviation with the new set of observations, and recomputed the estimated experiment impact of the remaining candidates. Fig. \ref{fig:seq_exps}(a) shows that adding in the second set of observations (those in Fig. \ref{fig:imp}(c)) produced a large drop in prediction deviation, seen in Figs. \ref{fig:imp}(b) and \ref{fig:imp}(d). Additional experiments continued to reduce uncertainty, but in much smaller amounts, consistent with the findings of Fig. \ref{fig:expmt_imp}. After adding data from just 3 of the 20 candidate experiments, the uncertainty was at nearly the level that was obtained by including all 20 of the candidate experiments. Fig. \ref{fig:seq_exps}(b) compares the prediction deviation with only the initial experiment to that obtained after including the first three experiments selected using the estimated experiment impact. Initially the data supported both the hypothesis that nearly none of the CD4 T cells were refractory, and the hypothesis that nearly all of the CD4 T cells were refractory. With the additional observations, the prediction deviation shows that only a small minority of CD4 T cells are refractory.

\begin{figure*}[]
\centering
\includegraphics[]{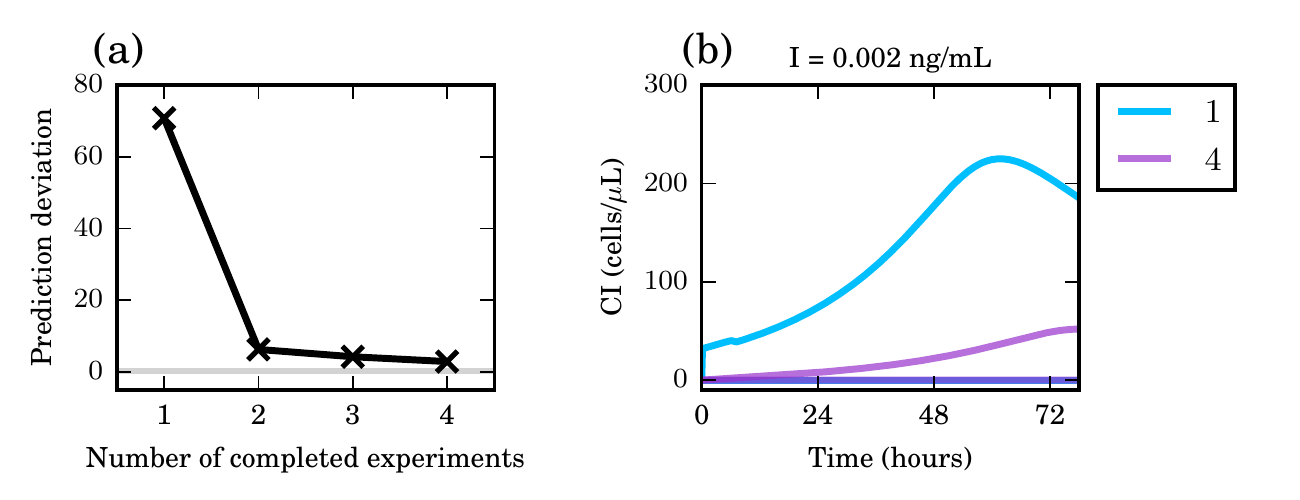}
\caption{(a) Observations were added sequentially from the candidate experiment with the best estimated experiment impact, and prediction deviation recomputed after each addition. The horizontal gray line shows the prediction deviation obtained after adding observations from all 20 candidate experiments. (b) The prediction deviation models corresponding to the 1 (blue) and 4 (purple) completed experiment markers from (a).} \label{fig:seq_exps}
\end{figure*}

\section{Theoretical Analysis}
Prediction deviation has a strong theoretical guarantee that further motivates its use as a metric of uncertainty. For the purposes of the theoretical analysis, we assume that there exists a true model $\boldsymbol{\theta}^{\textrm{true}}$, and the observed data equal the output of this true model, plus random noise:
\begin{equation*}
\tilde{x}^j_i(t) = x_i(t;\boldsymbol{\theta}^{\textrm{true}},\boldsymbol{\nu}^j) + \epsilon_{ijt},
\end{equation*}
where $\epsilon_{ijt}$ are independent but not necessarily identically distributed random variables. Let $\alpha$ be such that $z^{*}_u$ used to measure prediction deviation is the upper-bound on a $1-\alpha$ confidence interval. Under reasonable assumptions on $\epsilon_{ijt}$ which are given in Appendix \ref{Appendix:Theory}, the following theorem holds:
\begin{theorem}\label{thm}
With probability at least $1-\alpha$,
\begin{align*}
z_{\textrm{dev}}(\boldsymbol{\theta}^{\textrm{true}}, \boldsymbol{\bar{\theta}}^1;\mathcal{Y}) & \leq z_{\textrm{dev}}(\boldsymbol{\bar{\theta}}^1,\boldsymbol{\bar{\theta}}^2;\mathcal{Y}) \quad \textrm{and} \\
z_{\textrm{dev}}(\boldsymbol{\theta}^{\textrm{true}}, \boldsymbol{\bar{\theta}}^2;\mathcal{Y}) & \leq z_{\textrm{dev}}(\boldsymbol{\bar{\theta}}^1,\boldsymbol{\bar{\theta}}^2;\mathcal{Y}).
\end{align*}
\end{theorem}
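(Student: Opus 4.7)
The plan is to reduce the theorem to a feasibility-and-optimality argument for the prediction deviation problem (\ref{prob:pred}). The core idea is that if the true model $\boldsymbol{\theta}^{\textrm{true}}$ itself satisfies the fit-error constraint (\ref{eq:con1}), then the pair $(\boldsymbol{\theta}^{\textrm{true}},\boldsymbol{\bar{\theta}}^1)$ is feasible for (\ref{prob:pred}) and its objective value cannot exceed the optimum, giving $z_{\textrm{dev}}(\boldsymbol{\theta}^{\textrm{true}},\boldsymbol{\bar{\theta}}^1;\mathcal{Y}) \leq z_{\textrm{dev}}(\boldsymbol{\bar{\theta}}^1,\boldsymbol{\bar{\theta}}^2;\mathcal{Y})$, with the symmetric argument delivering the second inequality. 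All of the probabilistic content of the theorem therefore sits in the single statement that $\boldsymbol{\theta}^{\textrm{true}}$ satisfies the fit constraint with probability at least $1-\alpha$.

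The first step is that calibration. Under the noise model $\tilde{x}^j_i(t) = x_i(t;\boldsymbol{\theta}^{\textrm{true}},\boldsymbol{\nu}^j) + \epsilon_{ijt}$, the fit error at the true parameters reduces to the pure-noise quantity
\begin{equation*}
z_{\textrm{fit}}(\boldsymbol{\theta}^{\textrm{true}};\mathcal{P},\mathbf{\tilde{x}}) = \sum_{j,i,t}\left(\epsilon_{ijt}/\sigma_{ijt}\right)^2,
\end{equation*}
whose distribution is determined entirely by the assumptions on $\epsilon_{ijt}$ in Appendix~\ref{Appendix:Theory}. Since $z^{*}_u$ is, by construction, the upper endpoint of a $1-\alpha$ confidence interval calibrated against exactly this noise model (either parametrically via the $\chi^2$ distribution or nonparametrically via the bootstrap), I would argue that the interval construction gives $\Pr\!\left(z_{\textrm{fit}}(\boldsymbol{\theta}^{\textrm{true}};\mathcal{P},\mathbf{\tilde{x}}) \leq z^{*}_u\right) \geq 1-\alpha$. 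The deterministic fact that $z_{\textrm{fit}}(\boldsymbol{\theta}^*) \leq z_{\textrm{fit}}(\boldsymbol{\theta}^{\textrm{true}})$, from $\boldsymbol{\theta}^*$ being the minimizer, is what makes an interval calibrated on the true-model residual also a valid upper confidence bound for the best-fit residual used in the main text.

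The second step is the optimality argument. On the event, of probability at least $1-\alpha$, on which $\boldsymbol{\theta}^{\textrm{true}}$ satisfies (\ref{eq:con1}), the pair $(\boldsymbol{\theta}^{\textrm{true}},\boldsymbol{\bar{\theta}}^1)$ is feasible for (\ref{prob:pred}) since $\boldsymbol{\bar{\theta}}^1$ is feasible by definition as a maximizer. Optimality of $(\boldsymbol{\bar{\theta}}^1,\boldsymbol{\bar{\theta}}^2)$ then yields $z_{\textrm{dev}}(\boldsymbol{\theta}^{\textrm{true}},\boldsymbol{\bar{\theta}}^1;\mathcal{Y}) \leq z_{\textrm{dev}}(\boldsymbol{\bar{\theta}}^1,\boldsymbol{\bar{\theta}}^2;\mathcal{Y})$. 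The second inequality follows identically from the feasibility of $(\boldsymbol{\theta}^{\textrm{true}},\boldsymbol{\bar{\theta}}^2)$. Both inequalities hold on the same event, so no union bound is required and the high-probability guarantee stated in the theorem follows.

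The main obstacle is precisely the calibration step: formalizing that the threshold $z^{*}_u$, which is motivated in the main text as a $1-\alpha$ confidence interval for the best-fit residual $z_{\textrm{fit}}(\boldsymbol{\theta}^*)$, also serves as a $1-\alpha$ upper bound for the true-model residual $z_{\textrm{fit}}(\boldsymbol{\theta}^{\textrm{true}})$. I expect the explicit moment or tail hypotheses on $\epsilon_{ijt}$ deferred to Appendix~\ref{Appendix:Theory} to be chosen precisely so that this calibration is automatic; once it is in place, the remainder of the argument is deterministic and follows immediately from the definition of the maximum in (\ref{prob:pred}).
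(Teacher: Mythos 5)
Your deterministic half is exactly the paper's argument: on the event that $\boldsymbol{\theta}^{\textrm{true}}$ satisfies constraint (\ref{eq:con1}), the pairs $(\boldsymbol{\theta}^{\textrm{true}},\boldsymbol{\bar{\theta}}^1)$ and $(\boldsymbol{\theta}^{\textrm{true}},\boldsymbol{\bar{\theta}}^2)$ are feasible for problem (\ref{prob:pred}), and optimality of $(\boldsymbol{\bar{\theta}}^1,\boldsymbol{\bar{\theta}}^2)$ yields both inequalities on that single event. The gap is in the calibration step, which you correctly flag as the main obstacle but then resolve by assertion. The interval $[z^*_l, z^*_u]$ is constructed in the paper as a confidence interval for the best-fit error $z_{\textrm{fit}}(\boldsymbol{\theta}^*;\mathcal{P},\mathbf{\tilde{x}})$, not for the pure-noise sum $\sum_{j,i,t}(\epsilon_{ijt}/\sigma_{ijt})^2$, and these are different random variables: on the data used for fitting one has $z_{\textrm{fit}}(\boldsymbol{\theta}^*) \leq z_{\textrm{fit}}(\boldsymbol{\theta}^{\textrm{true}})$ pointwise, so an upper confidence limit calibrated on the former need not cover the latter; your appeal to the minimizer inequality runs in the unhelpful direction. (In the Gaussian case the best-fit error is roughly $\chi^2$ with $n-p$ degrees of freedom while the true-model error is $\chi^2_n$, so the discrepancy is not a technicality.)

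The paper closes this gap with a specific stochastic-dominance argument that is the real content of Appendix \ref{Appendix:Theory}. Assumption \ref{ass:ci} states the $1-\alpha$ coverage for $z_{\textrm{fit}}(\boldsymbol{\theta}^*;\mathcal{P},\mathbf{\tilde{x}})$ with $\boldsymbol{\theta}^*$ held \emph{fixed} and the randomness taken over fresh realizations of $\mathbf{\tilde{x}}$; in that regime the best-fit residual at each observation is $(\epsilon_{ijt}-b_{ijt})^2$ for a fixed bias $b_{ijt}$, while the true-model residual is $\epsilon_{ijt}^2$. Lemma \ref{lemma:1} uses symmetry and unimodality of the noise (Assumption \ref{ass:distr}) to show that intervals centered at zero carry the most probability mass, Lemma \ref{lem:order} converts this into the per-observation ordering $R^{\textrm{true}}_{ijt} \preceq R^*_{ijt}$, and closure of stochastic ordering under independent weighted sums gives $z_{\textrm{fit}}(\boldsymbol{\theta}^{\textrm{true}};\mathcal{P},\mathbf{\tilde{x}}) \preceq z_{\textrm{fit}}(\boldsymbol{\theta}^*;\mathcal{P},\mathbf{\tilde{x}})$ (Corollary \ref{cor:order}), whence the true model inherits the coverage. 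None of this is ``automatic'' from tail or moment hypotheses on $\epsilon_{ijt}$; it is precisely the bridge your proposal is missing.
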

This theorem means that the trajectory of the true model is in a particular sense bounded by that of the prediction deviation models: With high probability, it does not differ from either of the prediction deviation models by an amount larger than the difference in the prediction deviation models themselves. Thus if the prediction deviation is small and the trajectories of the prediction deviation models are close, then the trajectory of the true model can be specified within a narrow window, with high probability. This guarantee shows that prediction deviation corresponds to bounds on the underlying true model, and provides additional support for the validity of prediction deviation as a metric of uncertainty. The proof is given in Appendix \ref{Appendix:Theory}.

\section{Related Works}
There are several related lines of work assessing predictive power in dynamical systems. \citet{Kreutz12} use an optimization approach to measure prediction confidence intervals. Prediction intervals are measured by solving a sequence of minimization problems, separately for each time point in each prediction problem. Prediction intervals differ from the prediction deviation in that there might be different models that provide the upper and lower bounds at each time interval, whereas prediction deviation produces a single pair of models that maximizes the total deviation across all time points. The main strength of using prediction deviation as a measure of uncertainty is the ability to directly predict the impact of an additional experiment on the prediction deviation, via the estimated experiment impact problem. \citet{Kreutz12} propose using the prediction intervals of the candidate experiments to decide which experiment would have the highest impact on the prediction problem. For nonlinear dynamical systems, reducing uncertainty of the model under one condition (the candidate experiment) does not necessarily reduce uncertainty under a different condition (the prediction problem). This is shown clearly in Fig. \ref{fig:imp_perf}(b), where many candidate experiments had large uncertainty themselves, yet their observations did not reduce the uncertainty in the prediction problem. For prediction deviation, on the other hand, solving the optimization problem in (\ref{prob:exp}) provides a direct estimate of how much reducing uncertainty in the proposed experiment will reduce uncertainty in the prediction problem. Because it is a worst-case analysis, the estimate from solving (\ref{prob:exp}) also will not make the sort of error shown in Fig. \ref{fig:imp_perf}(b) where the recommended experiments end up not reducing uncertainty. Other approaches to measuring prediction intervals include boostrapping \citep{StJohn13} and MCMC sampling in a Bayesian framework \citep{Vanlier12}. \citet{Vanlier13} provide a review of recent approaches to measuring uncertainty both in parameters and in predictions. 

Optimal experimental design has typically been studied in the context of parameter estimation \citep{Bandara09, Raue10, Transtrum12} or, more recently, model selection and discrimination \citep{Flassig12, Vanlier14, Busetto13, Daunizeau11, Skanda10}. \citet{Kreutz09} provide a review of recent approaches to optimal experimental design for these two problems. Our methods here are for optimal experimental design for prediction uncertainty, which generally requires predicting the impact of a proposed experiment on prediction uncertainty. \citet{Casey07} measure prediction uncertainty using a linearization of the prediction problem, and then show how to predict the impact of a proposed experiment on the approximated prediction uncertainty. Another approach to optimal experiment design is to simulate the outcome of the proposed experiment using the best-fit model, and to measure the corresponding reduction in uncertainty \citep{Transtrum12}. Useful experiments will themselves have high prediction uncertainty, so there will likely be a large range of possible outcomes, only one of which is the best-fit outcome. As seen in Fig. \ref{fig:exp7}, the impact of the experiment on prediction uncertainty may depend strongly on which of the possible outcomes is realized. The actual reduction of uncertainty from an experiment could be much less than that predicted by the best-fit outcome, potentially wasting a valuable experiment. Solving (\ref{prob:exp}) measures uncertainty under the worst-case of the possible outcomes of the experiment, ensuring that the experiment will be useful whatever the outcome may be.

\section{Conclusions}
Two important questions that arise when fitting nonlinear dynamical systems to data are uncertainty quantification and optimal experimental design. We presented in this paper a prediction-centered approach for measuring uncertainty in a dynamical system's fit to data. Prediction deviation is able to directly show, via the pair of prediction deviation models, how much uncertainty remains in the prediction problem, thus answering the uncertainty quantification question. Solving the estimated experiment impact problem provides \textit{a priori} a direct estimate of the impact that a candidate experiment would have on uncertainty. This allows the experimenter to choose the additional experiments that are likely to most reduce uncertainty, thus answering the question of optimal experimental design. We used the estimated experiment impact problem to sequentially choose 4 experiments which produced nearly the same reduction in uncertainty as the full set of 20 candidate experiments. In addition to the sequential experimentation setting that was demonstrated here, estimated experiment impact can also be used to predict the impact of simultaneously running a number of experiments by combining them into a single candidate. Finally, we proved a bound that with high probability provides a direct relationship between prediction deviation and how constrained the underlying true model is, providing a theoretical foundation for using prediction deviation as a metric of uncertainty.

\begin{acknowledgments}
Funding for this project was provided in part by Siemens and the US Army Research Office.
\end{acknowledgments}

\appendix

\section{Implementation Details}\label{Appendix:Imp}
\subsection{Specifying the Parameter $\eta$}
Observations $\mathbf{\tilde{x}}'$ from candidate experiment $\mathcal{P}'$ would constrain the prediction deviation models according to the two constraints $z_{\textrm{fit}}(\boldsymbol{\bar{\theta}}^1; \mathcal{P}', \mathbf{\tilde{x}}') \leq \eta$ and $z_{\textrm{fit}}(\boldsymbol{\bar{\theta}}^2; \mathcal{P}', \mathbf{\tilde{x}}') \leq \eta$. The amount that the fit error on the new models would be constrained, $\eta$, is a parameter in the estimated experiment impact problem, (\ref{prob:exp}). Assuming normally distributed noise and a reasonable estimate of the experiment noise level $\sigma_{ijt}^2$, $z_{\textrm{fit}}(\boldsymbol{\theta}^*; \mathcal{P}', \mathbf{\tilde{x}}')$ follows a $\chi^2$ distribution whose $95\%$ percentile provides a reasonable choice for $\eta$. Alternatively, since observations are normalized by their noise level when computing fit error, if all observations contribute equally to the uncertainty then $\eta = z^*_u |\mathcal{P}'|/|\mathcal{P}|$ provides a reasonable choice, where $|\mathcal{P}|$ is the number of observations in experiment $\mathcal{P}$ and $z^*_u$ is the upper end of the $95\%$ confidence interval for the best-fit error. This is the approach we used in our experiments, and the effect of this $\eta$ through (\ref{eq:added_con}) can be seen in Fig. \ref{fig:imp}(a).

The following result provides the motivation for constraint (\ref{eq:added_con}) in the estimated experiment impact problem.
\begin{proposition}
$z_{\textrm{fit}}(\boldsymbol{\bar{\theta}}^1; \mathcal{P}', \mathbf{\tilde{x}}') \leq \eta$ and $z_{\textrm{fit}}(\boldsymbol{\bar{\theta}}^2; \mathcal{P}', \mathbf{\tilde{x}}') \leq \eta$ imply $z_{\textrm{dev}}(\boldsymbol{\bar{\theta}}^1,\boldsymbol{\bar{\theta}}^2;\mathcal{P}') \leq 4\eta$.
\end{proposition}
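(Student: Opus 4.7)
The plan is a direct termwise argument using the elementary inequality $(a-b)^2 \le 2(a-c)^2 + 2(c-b)^2$, which follows from expanding $(a-b)^2 = ((a-c)+(c-b))^2$ and applying $2uv \le u^2 + v^2$ to the cross term. This is essentially a squared triangle inequality and is the natural tool because $z_{\textrm{dev}}$ and $z_{\textrm{fit}}$ are both sums of squared, noise-normalized residuals over the same index set $(i,j,t)$ associated with $\mathcal{P}'$.

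First I would write out the two definitions restricted to $\mathcal{P}'$, with summation over the same triples $(i,j,t)$ for $j$ indexing the experiments in $\mathcal{P}'$, $i \in I_j$, and $t \in T_{i,j}$. For each such triple set $a = x_i(t;\boldsymbol{\bar{\theta}}^1,\boldsymbol{\nu}^j)$, $b = x_i(t;\boldsymbol{\bar{\theta}}^2,\boldsymbol{\nu}^j)$, and $c = \tilde{x}^j_i(t)$, and divide through by $\sigma_{ijt}$ so the inequality becomes
\begin{equation*}
\left(\tfrac{a-b}{\sigma_{ijt}}\right)^2 \le 2\left(\tfrac{a-c}{\sigma_{ijt}}\right)^2 + 2\left(\tfrac{c-b}{\sigma_{ijt}}\right)^2.
\end{equation*}

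Next I would sum this inequality over all $(i,j,t)$ in $\mathcal{P}'$. The left-hand side is exactly $z_{\textrm{dev}}(\boldsymbol{\bar{\theta}}^1,\boldsymbol{\bar{\theta}}^2;\mathcal{P}')$, and the two terms on the right are $2 z_{\textrm{fit}}(\boldsymbol{\bar{\theta}}^1;\mathcal{P}',\mathbf{\tilde{x}}')$ and $2 z_{\textrm{fit}}(\boldsymbol{\bar{\theta}}^2;\mathcal{P}',\mathbf{\tilde{x}}')$. Applying the two hypotheses $z_{\textrm{fit}}(\boldsymbol{\bar{\theta}}^1;\mathcal{P}',\mathbf{\tilde{x}}') \le \eta$ and $z_{\textrm{fit}}(\boldsymbol{\bar{\theta}}^2;\mathcal{P}',\mathbf{\tilde{x}}') \le \eta$ yields the bound $2\eta + 2\eta = 4\eta$ and completes the proof.

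There is no real obstacle here: this is a one-line application of the squared triangle inequality, and the only thing to be careful about is ensuring that $z_{\textrm{fit}}$ and $z_{\textrm{dev}}$ are indexed over the same summation ranges and normalized by the same $\sigma_{ijt}$, which is the case by construction. It is worth noting that the constant $4$ is not tight in general; one could obtain a slightly better constant with Cauchy–Schwarz weighted by a free parameter, but $4\eta$ is already enough to motivate constraint (\ref{eq:added_con}).
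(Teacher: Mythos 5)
Your proof is correct and follows essentially the same route as the paper: both insert the observed data $\tilde{x}^j_i(t)$ as an intermediate point between the two trajectories and control the resulting decomposition with a triangle-type inequality. The only difference is the final step: the paper applies Minkowski's inequality to the full residual vectors, bounding the deviation by $\bigl(\sqrt{z_{\textrm{fit}}(\boldsymbol{\bar{\theta}}^1;\mathcal{P}',\mathbf{\tilde{x}}')}+\sqrt{z_{\textrm{fit}}(\boldsymbol{\bar{\theta}}^2;\mathcal{P}',\mathbf{\tilde{x}}')}\bigr)^2$, whereas you use the termwise bound $(u+v)^2\le 2u^2+2v^2$, giving $2z_{\textrm{fit}}(\boldsymbol{\bar{\theta}}^1;\mathcal{P}',\mathbf{\tilde{x}}')+2z_{\textrm{fit}}(\boldsymbol{\bar{\theta}}^2;\mathcal{P}',\mathbf{\tilde{x}}')$. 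The paper's intermediate bound is the sharper of the two (by AM--GM, $2\sqrt{z_1 z_2}\le z_1+z_2$), but under the symmetric hypotheses both collapse to exactly $4\eta$, so your argument is a perfectly valid and slightly more elementary substitute. One small correction: your closing remark that the constant $4$ is not tight is mistaken. With a single observation, taking $x_i(t;\boldsymbol{\bar{\theta}}^1,\boldsymbol{\nu}^j)-\tilde{x}^j_i(t)=\sqrt{\eta}\,\sigma_{ijt}$ and $x_i(t;\boldsymbol{\bar{\theta}}^2,\boldsymbol{\nu}^j)-\tilde{x}^j_i(t)=-\sqrt{\eta}\,\sigma_{ijt}$ saturates both hypotheses and yields $z_{\textrm{dev}}(\boldsymbol{\bar{\theta}}^1,\boldsymbol{\bar{\theta}}^2;\mathcal{P}')=4\eta$, so no better constant is obtainable from these assumptions alone.
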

\begin{proof}
\begin{align*}
z_{\textrm{dev}}(\boldsymbol{\bar{\theta}}^1,& \boldsymbol{\bar{\theta}}^2;\mathcal{P}') \\
&= \sum_{j=1}^J \sum_{i \in I_j} \sum_{t \in T_{i,j}}  \left( \frac{x_i(t;\boldsymbol{\bar{\theta}}^1,\boldsymbol{\nu}^\ell) - x_i(t;\boldsymbol{\bar{\theta}}^2,\boldsymbol{\nu}^\ell)}{\sigma_{ilt}} \right)^2 \\
&= \sum_{j=1}^J \sum_{i \in I_j} \sum_{t \in T_{i,j}}  \left( \frac{(x_i(t;\boldsymbol{\bar{\theta}}^1,\boldsymbol{\nu}^\ell) - \tilde{x}^j_i(t))}{\sigma_{ilt}} \right. \\
&\quad\quad\quad\quad\quad\quad\quad\quad - \left. \frac{(x_i(t;\boldsymbol{\bar{\theta}}^2,\boldsymbol{\nu}^\ell)- \tilde{x}^j_i(t) )}{\sigma_{ilt}} \right)^2 \\
& \leq \left( \sqrt{z_{\textrm{fit}}(\boldsymbol{\bar{\theta}}^1; \mathcal{P}', \mathbf{\tilde{x}}')} + \sqrt{z_{\textrm{fit}}(\boldsymbol{\bar{\theta}}^2; \mathcal{P}', \mathbf{\tilde{x}}')} \right)^2 \\
& \leq 4\eta.
\end{align*}
The third line uses the triangle inequality and the last line is by supposition.
\end{proof}
This result allows for an approximation of the impact that $\mathcal{P}'$ that does not require knowledge of the data $\mathbf{\tilde{x}}'$. The triangle inequality is generally loose, and incorporating this constraint into a maximization problem means that the result is the worst-case impact of $\mathbf{\tilde{x}}'$. These two approximations provide room for the additional approximation made above in choosing $\eta$. Ultimately Fig. \ref{fig:imp_perf}(a) shows that the approximations involved in estimating experiment impact are good enough to be useful.

\subsection{Simulation and Optimization}
SloppyCell \citep{SloppyCell1, SloppyCell2} was used to integrate the model ODE system. In addition to integrating the model, SloppyCell integrates the forward sensitivity system, which provides gradients of the model trajectories with respect to the parameters, $\nabla_{\boldsymbol{\theta}} x_i(t; \boldsymbol{\theta}, \boldsymbol{\nu}^j)$. From these gradients, it is a straightforward calculation to obtain the gradients of the objectives and constraints for the three optimization problems in this paper: the data fitting problem, the prediction deviation problem, and the estimated experiment impact problem. All optimization problems were solved using random restarts of gradient-based optimization methods, with each optimization problem solved from 20 random initializations. The data fitting problem is an unconstrained minimization problem, and was solved using the Scipy implementation of the Newton conjugate-gradient algorithm \citep{Nash84, Nocedal06}. The prediction deviation and estimated experiment impact problems are constrained maximization problems and were solved using the logarithmic barrier method \citep[Framework 17.2]{Nocedal06}. This method solves the constrained problem via a sequence of unconstrained problems, each of which was solved using the Newton conjugate-gradient method. The computational difficulty of each of these unconstrained problems is similar to that of the data-fitting problem. Solving (\ref{prob:pred}) and (\ref{prob:exp}) should thus scale in a similar way as the data fitting problem, and have similar challenges. Feasible initial values for the prediction deviation and estimated experiment impact optimization problems were obtained using a Gaussian random walk from the best-fit parameters (which are always feasible), rejecting infeasible steps.

\subsection{Experimental Data}
The data for the experiment on IFN$\alpha$ dynamics were those provided by \citet{Browne15}. There two parameters were measured separately from these data, and we followed and treated these parameters, as well as all initial conditions, as known. One of the known parameters was the IFN$\alpha$ decay rate, and so $\textrm{I}(t)$ was thus known. The estimation done in this paper was then on a space of 7 parameters and 5 variables. The noise variance estimate used for weighted least squares and for prediction deviation, $\sigma_{ijt}^2$, was taken as the average over time of the sample variances across the four replicates at each time point, separately for each set of variables and IFN$\alpha$ level. This is equivalent to the maximum likelihood estimate under a model where the noise is normally distributed with a variance that differs across variables and IFN$\alpha$ levels but is constant across time points.

\section{Proof of the Theoretical Result}\label{Appendix:Theory}
The result of Theorem \ref{thm} provides a theoretical foundation for using prediction deviation as a metric of uncertainty by showing that it relates directly to bounds on the behavior of the underlying true model. The theorem requires the following assumptions:
\begin{assumption}
The observed data are the output of a true model $\boldsymbol{\theta}^{\textrm{true}}$, plus noise: $\tilde{x}^j_i(t) = x_i(t;\boldsymbol{\theta}^{\textrm{true}},\boldsymbol{\nu}^j) + \epsilon_{ijt}$.
\end{assumption}
\begin{assumption}\label{ass:ind}
The random variables $\epsilon_{ijt}$ are independent.
\end{assumption}
\begin{assumption}\label{ass:distr}
The probability density function of $\epsilon_{ijt}$ is symmetric about $0$ and unimodal, meaning the distribution function $F_{\epsilon_{ijt}}(x)$ is convex for $x\leq 0$ and concave for $x \geq 0$.
\end{assumption}
\begin{assumption}\label{ass:ci}
Let $\boldsymbol{\theta}^*$ be the best-fit model under a particular realization of the observations and $z_u^*$ fixed. Then, assume $\mathbb{P}_{\mathbf{\tilde{x}}}(z_{\textrm{fit}}(\boldsymbol{\theta}^*; \mathcal{P}, \mathbf{\tilde{x}}) \leq z_u^*) \geq 1-\alpha$. 
\end{assumption}
Assumption \ref{ass:ind} requires independence, but does not require $\epsilon_{ijt}$ to be identically distributed, thus the noise level may vary across different observations. Assumption \ref{ass:distr} is quite general: it is satisfied by the normal distribution, as well as by other heavy-tailed distributions. In Assumption \ref{ass:ci}, the model $\boldsymbol{\theta}^*$ is held constant and the randomness is over different realizations of $\epsilon_{ijt}$, and thus different realizations of $\mathbf{\tilde{x}}$. This assumption is about how the best-fit to one realization of the data generalizes to other realizations of the data, and requires that $z_u^*$, used in constraints (\ref{eq:con1}) and (\ref{eq:con2}), actually provides a $1-\alpha$ upper bound for the fit error.

For the proof of Theorem \ref{thm}, we define notation to describe the squared residuals. Let $R^{\textrm{true}}_{ijt} = \left(x_i(t;\boldsymbol{\theta}^{\textrm{true}},\boldsymbol{\nu}^j) - \tilde{x}^j_i(t)\right)^2$ be the squared residuals under the true model and $R^*_{ijt}$ the squared residuals under the best-fit model, $\boldsymbol{\theta}^*$. Let $b_{ijt} = x_i(t;\boldsymbol{\theta}^*,\boldsymbol{\nu}^j) - x_i(t;\boldsymbol{\theta}^{\textrm{true}},\boldsymbol{\nu}^j)$ be the bias of the best-fit model.

The following result shows that intervals of the noise distribution centered on $0$ contain the most probability mass.
\begin{lemma}\label{lemma:1}
For $x \geq 0$ and for $a \in \mathbb{R}$, $F_{\epsilon_{ijt}}(x + a) - F_{\epsilon_{ijt}}(-x + a) \leq F_{\epsilon_{ijt}}(x) - F_{\epsilon_{ijt}}(-x)$.
\end{lemma}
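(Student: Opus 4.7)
\textbf{Proof proposal for Lemma \ref{lemma:1}.} The plan is to fix $x \geq 0$, set $h(a) := F_{\epsilon_{ijt}}(a+x) - F_{\epsilon_{ijt}}(a-x)$, and show that $h$ attains its maximum at $a=0$. The lemma is exactly the inequality $h(a) \leq h(0)$. Below I drop the subscript $ijt$ and write $F$ for $F_{\epsilon_{ijt}}$ and $f$ for its density.

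First I would reduce to $a \geq 0$. Symmetry of $f$ about $0$ yields $F(-y) = 1 - F(y)$ for all $y$, so
\begin{equation*}
h(-a) = F(-a+x) - F(-a-x) = \bigl(1 - F(a-x)\bigr) - \bigl(1 - F(a+x)\bigr) = h(a),
\end{equation*}
i.e. $h$ is even. Hence it suffices to show $h$ is non-increasing on $[0,\infty)$.

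Next I would use the unimodality assumption. Convexity of $F$ on $(-\infty,0]$ and concavity on $[0,\infty)$ mean that the right-derivative $f$ is non-decreasing on $(-\infty,0]$ and non-increasing on $[0,\infty)$; combined with symmetry $f(-t) = f(t)$, $f$ is unimodal with mode at $0$. For $a \geq 0$, one then writes $h'(a) = f(a+x) - f(a-x)$ (using right-derivatives, which exist everywhere by the convexity/concavity hypothesis) and checks $h'(a) \leq 0$ in two cases.
\begin{itemize}
\item If $a \geq x$, then $0 \leq a-x \leq a+x$, so monotonicity of $f$ on $[0,\infty)$ gives $f(a+x) \leq f(a-x)$.
\item If $0 \leq a < x$, then $a - x < 0$, and by symmetry $f(a-x) = f(x-a)$. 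Since $0 \leq x-a \leq x+a$, monotonicity on $[0,\infty)$ again gives $f(a+x) \leq f(x-a) = f(a-x)$.
\end{itemize}
In both cases $h'(a) \leq 0$, so $h$ is non-increasing on $[0,\infty)$, and combined with evenness this yields $h(a) \leq h(0)$ for every $a \in \mathbb{R}$, which is the claim.

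The calculation is otherwise routine; the only real subtlety is that $F$ need not be differentiable in the classical sense everywhere (e.g. $f$ may jump at the mode). The clean way around this is to work with the one-sided derivative, which the convexity/concavity hypothesis guarantees to exist pointwise, or equivalently to recast the argument integrally as $h(a) - h(0) = \int_x^{a+x} f - \int_{x-a}^{x} f$ (for $0 \leq a \leq x$) and bound each integrand by $f(x)$ on its interval using monotonicity of $f$ on $[0,\infty)$, then handling the case $a > x$ by splitting the second integral at $0$ and applying symmetry. Either route avoids any technical issue about differentiability and relies only on the monotonicity of $f$ on $[0,\infty)$ together with the symmetry $f(-t)=f(t)$.
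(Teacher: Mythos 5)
Your proof is correct, and while it uses the same two ingredients as the paper's proof (the symmetry $F(-y)=1-F(y)$ and the unimodality encoded in the convexity/concavity of $F$), its structure is genuinely different and, I think, cleaner. The paper argues in three cases: for $a\ge x$ it notes, as you do, that $\frac{\partial}{\partial a}\bigl(F(x+a)-F(-x+a)\bigr)\le 0$ and hence that the maximum over that range occurs at $a=x$, but it then needs the extra concavity estimate $F(2x)-F(0)\le 2\bigl(F(x)-F(0)\bigr)$ together with symmetry to get from the value at $a=x$ down to $F(x)-F(-x)$; the case $a\le -x$ is dispatched by the mirror convexity argument; and the case $|a|<x$ requires a separate midpoint-concavity inequality $F(x)\ge \tfrac{1}{2}\bigl(F(x+a)+F(x-a)\bigr)$. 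You instead observe that $h(a)=F(a+x)-F(a-x)$ is even and that $h'(a)=f(a+x)-f(a-x)\le 0$ on all of $[0,\infty)$, using the reflection $f(a-x)=f(x-a)$ to cover $0\le a<x$; this single monotonicity argument at the level of the density replaces the paper's three cases and its two auxiliary concavity inequalities. Your closing remark about one-sided derivatives (or the integral recasting of $h(a)-h(0)$) also addresses a point the paper glosses over, since its own proof differentiates $F$ in $a$ even though the hypotheses only guarantee one-sided derivatives; either of your workarounds closes that small gap.
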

\begin{proof}
This result follows from Assumption \ref{ass:distr}. When $x=0$ the result is trivial. For $x>0$, we first consider the case where $a \geq x$. For all $x\geq 0$, $F_{\epsilon_{ijt}}(x)$ is concave, and thus $F_{\epsilon_{ijt}}'(x)$ is monotonically non-increasing. This means $\frac{\partial}{\partial a} \left(F_{\epsilon_{ijt}}(x+a) - F_{\epsilon_{ijt}}(-x+a) \right) \leq 0$ $\forall a \geq x$, and this quantity is maximized when $a=x$. Thus,
\begin{align*}
F_{\epsilon_{ijt}}(x+a) - &F_{\epsilon_{ijt}}(-x+a) \\
&\leq F_{\epsilon_{ijt}}(2x) - F_{\epsilon_{ijt}}(0) \\
& \leq 2(F_{\epsilon_{ijt}}(x) - F_{\epsilon_{ijt}}(0))\\
& = F_{\epsilon_{ijt}}(x) + 1 - F_{\epsilon_{ijt}}(-x) - 2F_{\epsilon_{ijt}}(0)\\
& = F_{\epsilon_{ijt}}(x) - F_{\epsilon_{ijt}}(-x),
\end{align*}
which is the statement of the lemma. The second line follows directly from the concavity of $F_{\epsilon_{ijt}}(x)$ and the third line uses the symmetry $F_{\epsilon_{ijt}}(x) = 1 - F_{\epsilon_{ijt}}(-x)$. When $a \leq -x$, the same argument holds using the convexity of $F_{\epsilon_{ijt}}(x)$ for $x \leq 0$.

For the remaining case, $|a| < x$, 
\begin{equation*}
F_{\epsilon_{ijt}}(x) \geq \frac{1}{2} \left( F_{\epsilon_{ijt}}(x+a) + F_{\epsilon_{ijt}}(x-a) \right)
\end{equation*}
by the concavity of $F_{\epsilon_{ijt}}(x)$ on the interval $[x-a, x+a]$. From the symmetry, it then follows that
\begin{align*}
1 + F_{\epsilon_{ijt}}(x) - F_{\epsilon_{ijt}}(-x) &\geq F_{\epsilon_{ijt}}(x+a) + 1\\
&\quad - F_{\epsilon_{ijt}}(-x+a).
\end{align*}
After rearranging, this proves the lemma.
\end{proof}
An important concept for the proof of Theorem \ref{thm} is that of a stochastic ordering, which we now define and then use to prove the theorem.
\begin{definition}
For random variables $X$ and $Y$, $X \preceq Y$ if $\mathbb{P}(X > x) \leq \mathbb{P}(Y > x) \enskip \forall x$.
\end{definition}
\begin{lemma}\label{lem:order}
$R^{\textrm{true}}_{ijt} \preceq R^*_{ijt}$.
\end{lemma}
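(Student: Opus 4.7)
The plan is to reduce the stochastic ordering to a pointwise comparison of tail probabilities and then invoke Lemma \ref{lemma:1} with the bias $b_{ijt}$ playing the role of the shift parameter $a$.

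First, I would rewrite both residuals in terms of the noise variable $\epsilon_{ijt}$. Using Assumption~1, $\tilde{x}^j_i(t) = x_i(t;\boldsymbol{\theta}^{\textrm{true}},\boldsymbol{\nu}^j) + \epsilon_{ijt}$, so immediately $R^{\textrm{true}}_{ijt} = \epsilon_{ijt}^2$. Similarly, by the definition of $b_{ijt}$, we have $x_i(t;\boldsymbol{\theta}^*,\boldsymbol{\nu}^j) - \tilde{x}^j_i(t) = b_{ijt} - \epsilon_{ijt}$, and hence $R^{*}_{ijt} = (b_{ijt} - \epsilon_{ijt})^2$.

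Next, I would verify the stochastic ordering $\mathbb{P}(R^{\textrm{true}}_{ijt} > x) \leq \mathbb{P}(R^{*}_{ijt} > x)$ for every $x \in \mathbb{R}$. For $x<0$ both probabilities equal $1$. For $x\geq 0$, squaring gives
\begin{align*}
\mathbb{P}(R^{\textrm{true}}_{ijt} > x) &= 1 - \bigl(F_{\epsilon_{ijt}}(\sqrt{x}) - F_{\epsilon_{ijt}}(-\sqrt{x})\bigr), \\
\mathbb{P}(R^{*}_{ijt} > x) &= 1 - \bigl(F_{\epsilon_{ijt}}(\sqrt{x}+b_{ijt}) - F_{\epsilon_{ijt}}(-\sqrt{x}+b_{ijt})\bigr),
\end{align*}
where the second line uses $\mathbb{P}(|b_{ijt}-\epsilon_{ijt}| \leq \sqrt{x}) = \mathbb{P}(b_{ijt}-\sqrt{x} \leq \epsilon_{ijt} \leq b_{ijt}+\sqrt{x})$. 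Applying Lemma \ref{lemma:1} with the nonnegative number $\sqrt{x}$ in the role of $x$ and $a = b_{ijt}$ yields $F_{\epsilon_{ijt}}(\sqrt{x}+b_{ijt}) - F_{\epsilon_{ijt}}(-\sqrt{x}+b_{ijt}) \leq F_{\epsilon_{ijt}}(\sqrt{x}) - F_{\epsilon_{ijt}}(-\sqrt{x})$, which after substituting into the two displays above gives exactly the desired inequality.

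The only conceptual obstacle is lining up the algebra so that Lemma~\ref{lemma:1} applies cleanly, and this reduces to noticing that the probability mass of $\epsilon_{ijt}$ on an interval of half-width $\sqrt{x}$ is maximized when the interval is centered at $0$ (as given by Lemma~\ref{lemma:1}) rather than at $b_{ijt}$; the best-fit residual thus has at least as much mass in its tails as the true-model residual. No use of optimality of $\boldsymbol{\theta}^*$ is needed here — the conclusion is purely a statement about the bias shift $b_{ijt}$ and the symmetric unimodal noise.
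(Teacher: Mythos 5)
Your proof is correct and follows essentially the same route as the paper's: write $R^{\textrm{true}}_{ijt}=\epsilon_{ijt}^2$ and $R^*_{ijt}=(\epsilon_{ijt}-b_{ijt})^2$, express the relevant probabilities through $F_{\epsilon_{ijt}}$, and apply Lemma \ref{lemma:1} with $\sqrt{x}$ and $a=b_{ijt}$. The only cosmetic difference is that you phrase the ordering via tail probabilities $\mathbb{P}(\cdot>x)$ while the paper works with $\mathbb{P}(\cdot\leq x)$, which is equivalent.
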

\begin{proof}
\begin{align*}
\mathbb{P}(R^*_{ijt} \leq x) &= \mathbb{P}\left((\epsilon_{ijt} - b_{ijt})^2 \leq x \right) \\
& = F_{\epsilon_{ijt}}(\sqrt{x} + b_{ijt}) - F_{\epsilon_{ijt}}(-\sqrt{x} + b_{ijt})\\
& \leq F_{\epsilon_{ijt}}(\sqrt{x}) - F_{\epsilon_{ijt}}(-\sqrt{x}) \\
& = \mathbb{P}(R^{\textrm{true}}_{ijt} \leq x),
\end{align*}
using Lemma \ref{lemma:1}.
\end{proof}
The next result comes from Shaked and Shanthikumar \cite{Shaked07}, Theorem 1.A.3(b).
\begin{lemma}\label{lem:sum}
For independent random variables $X_1, \ldots, X_n$ and $Y_1, \ldots, Y_n$, let $X = \sum_{i=1}^n w_i X_i$ and $Y = \sum_{i=1}^n w_i Y_i$ with non-negative weights $w_1, \ldots, w_n$. If $X_i \preceq Y_i \enskip \forall i$, then $X \preceq Y$.
\end{lemma}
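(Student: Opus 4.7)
The plan is to prove this via the coupling characterization of the usual stochastic order (Strassen's theorem): $X_i \preceq Y_i$ holds if and only if there exist random variables $\widetilde{X}_i$ and $\widetilde{Y}_i$, defined jointly on some common probability space, such that $\widetilde{X}_i \stackrel{d}{=} X_i$, $\widetilde{Y}_i \stackrel{d}{=} Y_i$, and $\widetilde{X}_i \leq \widetilde{Y}_i$ almost surely. This characterization reduces a statement about tail inequalities to a pointwise inequality, which is exactly what is needed to handle a sum.

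First I would, for each $i = 1,\ldots, n$, invoke the coupling characterization to obtain a pair $(\widetilde{X}_i, \widetilde{Y}_i)$ on a probability space $(\Omega_i, \mathcal{F}_i, \mathbb{P}_i)$ with the matching marginals and with $\widetilde{X}_i \leq \widetilde{Y}_i$ almost surely. Next I would form the product space $(\Omega, \mathcal{F}, \mathbb{P}) = \bigotimes_{i=1}^n (\Omega_i, \mathcal{F}_i, \mathbb{P}_i)$ and regard each $(\widetilde{X}_i, \widetilde{Y}_i)$ as a random vector on $\Omega$ depending only on the $i$-th coordinate. Then the $\widetilde{X}_i$ are mutually independent with $\widetilde{X}_i \stackrel{d}{=} X_i$, so $(\widetilde{X}_1, \ldots, \widetilde{X}_n) \stackrel{d}{=} (X_1, \ldots, X_n)$; similarly for the $Y_i$. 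Crucially, the inequalities $\widetilde{X}_i \leq \widetilde{Y}_i$ survive the product construction and hold jointly almost surely.

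Setting $\widetilde{X} = \sum_{i=1}^n w_i \widetilde{X}_i$ and $\widetilde{Y} = \sum_{i=1}^n w_i \widetilde{Y}_i$, the non-negativity of each $w_i$ gives $\widetilde{X} \leq \widetilde{Y}$ almost surely. Since $\widetilde{X} \stackrel{d}{=} X$ and $\widetilde{Y} \stackrel{d}{=} Y$, we have for every $x \in \mathbb{R}$ that $\mathbb{P}(X > x) = \mathbb{P}(\widetilde{X} > x) \leq \mathbb{P}(\widetilde{Y} > x) = \mathbb{P}(Y > x)$, which is exactly $X \preceq Y$.

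The main obstacle is invoking the coupling characterization rigorously; since this is a standard result (and is precisely Theorem 1.A.1 in the same Shaked--Shanthikumar reference cited in the excerpt), I would cite it rather than prove it from scratch. If an elementary proof without Strassen's theorem were required, the alternative is induction on $n$: the base case $n=1$ follows because multiplication by the nonnegative scalar $w_1$ is monotone and hence preserves $\preceq$, and the inductive step reduces to showing that $X' \preceq Y'$ and $X_n \preceq Y_n$ with $X' \perp X_n$ and $Y' \perp Y_n$ imply $X' + w_n X_n \preceq Y' + w_n Y_n$, which one verifies by conditioning on the summand being added and writing tail probabilities as $\mathbb{P}(X'+ w_n X_n > t) = \mathbb{E}[\mathbb{P}(X' > t - w_n X_n \mid X_n)]$, then applying the hypothesis twice (first in $X'$ for fixed $X_n$, then in $X_n$ against the nonincreasing survival function of $Y'$). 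The coupling route is cleaner, however, and is what I would present.
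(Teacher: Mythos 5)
Your proof is correct, but note that the paper does not actually prove this lemma: it is stated with an explicit attribution to Shaked and Shanthikumar, Theorem 1.A.3(b), and used as a black box. Your coupling argument is essentially the standard textbook proof of that cited result: you invoke the construction-on-a-common-probability-space characterization of $\preceq$ (Theorem 1.A.1 in the same reference) for each pair $(X_i, Y_i)$, pass to the product space so that the couplings are independent across $i$ and the marginal laws of the vectors $(\widetilde{X}_1,\ldots,\widetilde{X}_n)$ and $(\widetilde{Y}_1,\ldots,\widetilde{Y}_n)$ match those of $(X_1,\ldots,X_n)$ and $(Y_1,\ldots,Y_n)$ (this is where the independence hypothesis is used), and then observe that nonnegative weights preserve the almost-sure pointwise inequality under summation. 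All the hypotheses of the lemma are used in the right places, and the fallback induction-plus-conditioning argument you sketch is also a valid elementary alternative. The only difference from the paper is that you supply a self-contained proof where the authors chose to cite; since your argument itself leans on the coupling characterization from the same book, in practice both routes rest on the same external reference, and yours simply makes the derivation explicit.
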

\begin{corollary}\label{cor:order}
$z_{\textrm{fit}}(\boldsymbol{\theta}^{\textrm{true}}; \mathcal{P}, \mathbf{\tilde{x}}) \preceq z_{\textrm{fit}}(\boldsymbol{\theta}^*; \mathcal{P}, \mathbf{\tilde{x}})$.
\end{corollary}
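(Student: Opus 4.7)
The corollary follows quickly by packaging the per-residual stochastic ordering from Lemma \ref{lem:order} with the preservation-under-weighted-sums result in Lemma \ref{lem:sum}. My plan is to rewrite $z_{\textrm{fit}}$ explicitly as a non-negatively weighted sum of squared residuals and then invoke the two lemmas in sequence.

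\textbf{Step 1: Express $z_{\textrm{fit}}$ as a weighted sum of squared residuals.} By the definition of $z_{\textrm{fit}}$ in (\ref{eq:zfit}),
\begin{equation*}
z_{\textrm{fit}}(\boldsymbol{\theta}; \mathcal{P}, \mathbf{\tilde{x}}) = \sum_{j=1}^J \sum_{i \in I_j} \sum_{t \in T_{i,j}} \frac{1}{\sigma_{ijt}^2}\, R_{ijt}(\boldsymbol{\theta}),
\end{equation*}
where $R_{ijt}(\boldsymbol{\theta}) = (x_i(t;\boldsymbol{\theta},\boldsymbol{\nu}^j) - \tilde{x}^j_i(t))^2$. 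In particular $R_{ijt}(\boldsymbol{\theta}^{\textrm{true}}) = R^{\textrm{true}}_{ijt}$ and $R_{ijt}(\boldsymbol{\theta}^*) = R^*_{ijt}$, and all weights $w_{ijt} := 1/\sigma_{ijt}^2$ are non-negative.

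\textbf{Step 2: Verify independence across indices.} Assumption \ref{ass:ind} states that the noise variables $\epsilon_{ijt}$ are independent. Since $R^{\textrm{true}}_{ijt} = \epsilon_{ijt}^2$ is a measurable function of $\epsilon_{ijt}$ alone, and $R^*_{ijt} = (\epsilon_{ijt} - b_{ijt})^2$ with $b_{ijt}$ a deterministic bias (recall $\boldsymbol{\theta}^*$ is held fixed in the sense of Assumption \ref{ass:ci}, or the analysis can be done conditionally), both families $\{R^{\textrm{true}}_{ijt}\}_{ijt}$ and $\{R^*_{ijt}\}_{ijt}$ are collections of independent random variables indexed by $(i,j,t)$. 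This is precisely the hypothesis required to apply Lemma \ref{lem:sum}.

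\textbf{Step 3: Apply the lemmas.} Lemma \ref{lem:order} gives $R^{\textrm{true}}_{ijt} \preceq R^*_{ijt}$ for every triple $(i,j,t)$. Combining the independence from Step 2 with the non-negative weights $w_{ijt}$ from Step 1, Lemma \ref{lem:sum} yields
\begin{equation*}
\sum_{j,i,t} w_{ijt} R^{\textrm{true}}_{ijt} \;\preceq\; \sum_{j,i,t} w_{ijt} R^*_{ijt},
\end{equation*}
which is exactly $z_{\textrm{fit}}(\boldsymbol{\theta}^{\textrm{true}}; \mathcal{P}, \mathbf{\tilde{x}}) \preceq z_{\textrm{fit}}(\boldsymbol{\theta}^*; \mathcal{P}, \mathbf{\tilde{x}})$.

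There is essentially no obstacle: the work has already been done in Lemmas \ref{lem:order} and \ref{lem:sum}. The only subtlety is the independence check in Step 2, which is immediate once one observes that each squared residual depends on a single $\epsilon_{ijt}$, so the corollary is a direct composition of the two preceding results.
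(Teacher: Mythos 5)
Your proposal matches the paper's proof exactly: the paper likewise observes that $z_{\textrm{fit}}$ is a weighted sum of squared residuals with weights $1/\sigma_{ijt}^2$ and then cites Lemmas \ref{lem:order} and \ref{lem:sum} together with Assumption \ref{ass:ind}. Your Step 2, making explicit that each squared residual depends on a single $\epsilon_{ijt}$ so that independence transfers, is a correct elaboration of a detail the paper leaves implicit.
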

\begin{proof}
The fit error is a weighted sum of the squared residuals, with weights $\frac{1}{\sigma_{ijt}^2}$, so this result follows directly from Lemmas \ref{lem:order} and \ref{lem:sum}, and Assumption \ref{ass:ind}.
\end{proof}
\begin{repeattheorem}
With probability at least $1-\alpha$,
\begin{align*}
z_{\textrm{dev}}(\boldsymbol{\theta}^{\textrm{true}}, \boldsymbol{\bar{\theta}}^1;\mathcal{Y}) & \leq z_{\textrm{dev}}(\boldsymbol{\bar{\theta}}^1,\boldsymbol{\bar{\theta}}^2;\mathcal{Y}) \quad \textrm{and} \\
z_{\textrm{dev}}(\boldsymbol{\theta}^{\textrm{true}}, \boldsymbol{\bar{\theta}}^2;\mathcal{Y}) & \leq z_{\textrm{dev}}(\boldsymbol{\bar{\theta}}^1,\boldsymbol{\bar{\theta}}^2;\mathcal{Y}).
\end{align*}
\end{repeattheorem}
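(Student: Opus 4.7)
The plan is to reduce the theorem to the statement that the true model is itself a feasible point for the prediction deviation optimization (\ref{prob:pred}), with probability at least $1-\alpha$. Once feasibility is established, the conclusion becomes almost immediate: since $(\boldsymbol{\bar{\theta}}^1, \boldsymbol{\bar{\theta}}^2)$ maximizes $z_{\textrm{dev}}$ over feasible pairs, and $(\boldsymbol{\theta}^{\textrm{true}}, \boldsymbol{\bar{\theta}}^1)$ is also a feasible pair (since $\boldsymbol{\bar{\theta}}^1$ is feasible by construction), the objective value at $(\boldsymbol{\theta}^{\textrm{true}}, \boldsymbol{\bar{\theta}}^1)$ cannot exceed the optimum. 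The same argument with the roles swapped handles the bound for $\boldsymbol{\bar{\theta}}^2$.

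The first step is therefore to prove that, with probability at least $1-\alpha$, $z_{\textrm{fit}}(\boldsymbol{\theta}^{\textrm{true}}; \mathcal{P}, \mathbf{\tilde{x}}) \leq z_u^*$. This is exactly what the machinery of Lemmas \ref{lemma:1}, \ref{lem:order}, \ref{lem:sum} and Corollary \ref{cor:order} has been set up to deliver: by Corollary \ref{cor:order}, $z_{\textrm{fit}}(\boldsymbol{\theta}^{\textrm{true}}; \mathcal{P}, \mathbf{\tilde{x}}) \preceq z_{\textrm{fit}}(\boldsymbol{\theta}^*; \mathcal{P}, \mathbf{\tilde{x}})$. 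Unpacking the definition of stochastic ordering as a statement about upper tails, this gives
\begin{equation*}
\mathbb{P}\bigl(z_{\textrm{fit}}(\boldsymbol{\theta}^{\textrm{true}}; \mathcal{P}, \mathbf{\tilde{x}}) \leq z_u^*\bigr) \geq \mathbb{P}\bigl(z_{\textrm{fit}}(\boldsymbol{\theta}^*; \mathcal{P}, \mathbf{\tilde{x}}) \leq z_u^*\bigr) \geq 1-\alpha,
\end{equation*}
where the last inequality is Assumption \ref{ass:ci}. Thus, on an event $E$ of probability at least $1-\alpha$, the true model satisfies the constraints (\ref{eq:con1}) and (\ref{eq:con2}) of problem (\ref{prob:pred}).

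The second step is to combine feasibility of $\boldsymbol{\theta}^{\textrm{true}}$ with optimality of $(\boldsymbol{\bar{\theta}}^1, \boldsymbol{\bar{\theta}}^2)$. On the event $E$, both $(\boldsymbol{\theta}^{\textrm{true}}, \boldsymbol{\bar{\theta}}^1)$ and $(\boldsymbol{\bar{\theta}}^2, \boldsymbol{\theta}^{\textrm{true}})$ satisfy (\ref{eq:con1})--(\ref{eq:con2}), since $\boldsymbol{\bar{\theta}}^1$ and $\boldsymbol{\bar{\theta}}^2$ are themselves feasible by construction. Because $(\boldsymbol{\bar{\theta}}^1, \boldsymbol{\bar{\theta}}^2)$ is a maximizer of $z_{\textrm{dev}}(\cdot, \cdot; \mathcal{Y})$ over all feasible pairs, we immediately obtain
\begin{equation*}
z_{\textrm{dev}}(\boldsymbol{\theta}^{\textrm{true}}, \boldsymbol{\bar{\theta}}^1; \mathcal{Y}) \leq z_{\textrm{dev}}(\boldsymbol{\bar{\theta}}^1, \boldsymbol{\bar{\theta}}^2; \mathcal{Y}),
\end{equation*}
and by symmetry the analogous bound for $\boldsymbol{\bar{\theta}}^2$.

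The bulk of the technical work is already packaged into the lemmas that precede the theorem, so the main obstacle here is conceptual rather than computational: one must recognize that the prediction deviation is defined as a maximum over feasible pairs, so that showing $\boldsymbol{\theta}^{\textrm{true}}$ lies in the feasible set is sufficient. The probabilistic content enters only through translating Assumption \ref{ass:ci} (a high-probability fit-error bound at the best fit) into the same bound at the true model via the stochastic ordering; care should be taken in that step to verify that all weights $1/\sigma_{ijt}^2$ are non-negative so that Lemma \ref{lem:sum} applies, and that the independence hypothesis of Assumption \ref{ass:ind} is used exactly where Lemma \ref{lem:sum} requires it.
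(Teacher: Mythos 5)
Your proposal is correct and follows essentially the same route as the paper's proof: establish via Corollary \ref{cor:order} and Assumption \ref{ass:ci} that $\boldsymbol{\theta}^{\textrm{true}}$ is feasible for problem (\ref{prob:pred}) with probability at least $1-\alpha$, then invoke optimality of $(\boldsymbol{\bar{\theta}}^1,\boldsymbol{\bar{\theta}}^2)$. The only cosmetic difference is that the paper phrases the final step as a contradiction while you argue directly from the definition of a maximizer; the content is identical.
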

\begin{proof}
By Corollary \ref{cor:order} and Assumption \ref{ass:ci},
\begin{align*}
\mathbb{P}(z_{\textrm{fit}}(\boldsymbol{\theta}^{\textrm{true}}; \mathcal{P}, \mathbf{\tilde{x}}) \leq z_u^*) &\geq \mathbb{P}(z_{\textrm{fit}}(\boldsymbol{\theta}^*; \mathcal{P}, \mathbf{\tilde{x}}) \leq z_u^*) \\
& \geq 1-\alpha.
\end{align*}
Thus with probability at least $1-\alpha$, $(\boldsymbol{\theta}^{\textrm{true}}, \boldsymbol{\bar{\theta}}^1)$ is a feasible solution to problem (\ref{prob:pred}). The proof of the theorem is then by contradiction: If $z_{\textrm{dev}}(\boldsymbol{\theta}^{\textrm{true}}, \boldsymbol{\bar{\theta}}^1;\mathcal{Y}) > z_{\textrm{dev}}(\boldsymbol{\bar{\theta}}^1,\boldsymbol{\bar{\theta}}^2;\mathcal{Y})$, then $(\boldsymbol{\bar{\theta}}^1,\boldsymbol{\bar{\theta}}^2)$ cannot be an optimal solution to problem (\ref{prob:pred}). However, $(\boldsymbol{\bar{\theta}}^1,\boldsymbol{\bar{\theta}}^2)$ are defined to be optimal solutions, and so the theorem holds. The same argument simultaneously holds for $(\boldsymbol{\theta}^{\textrm{true}}, \boldsymbol{\bar{\theta}}^2)$.
\end{proof}


\nocite{*}
%

\newpage\phantom{blabla}\newpage\phantom{blabla}

\section{Supplementary material}
We show in the supplemental material prediction deviation results for the Lorenz system from Fig. 1.

\renewcommand{\thefigure}{S\arabic{figure}}

\begin{figure}
\centering
\includegraphics[]{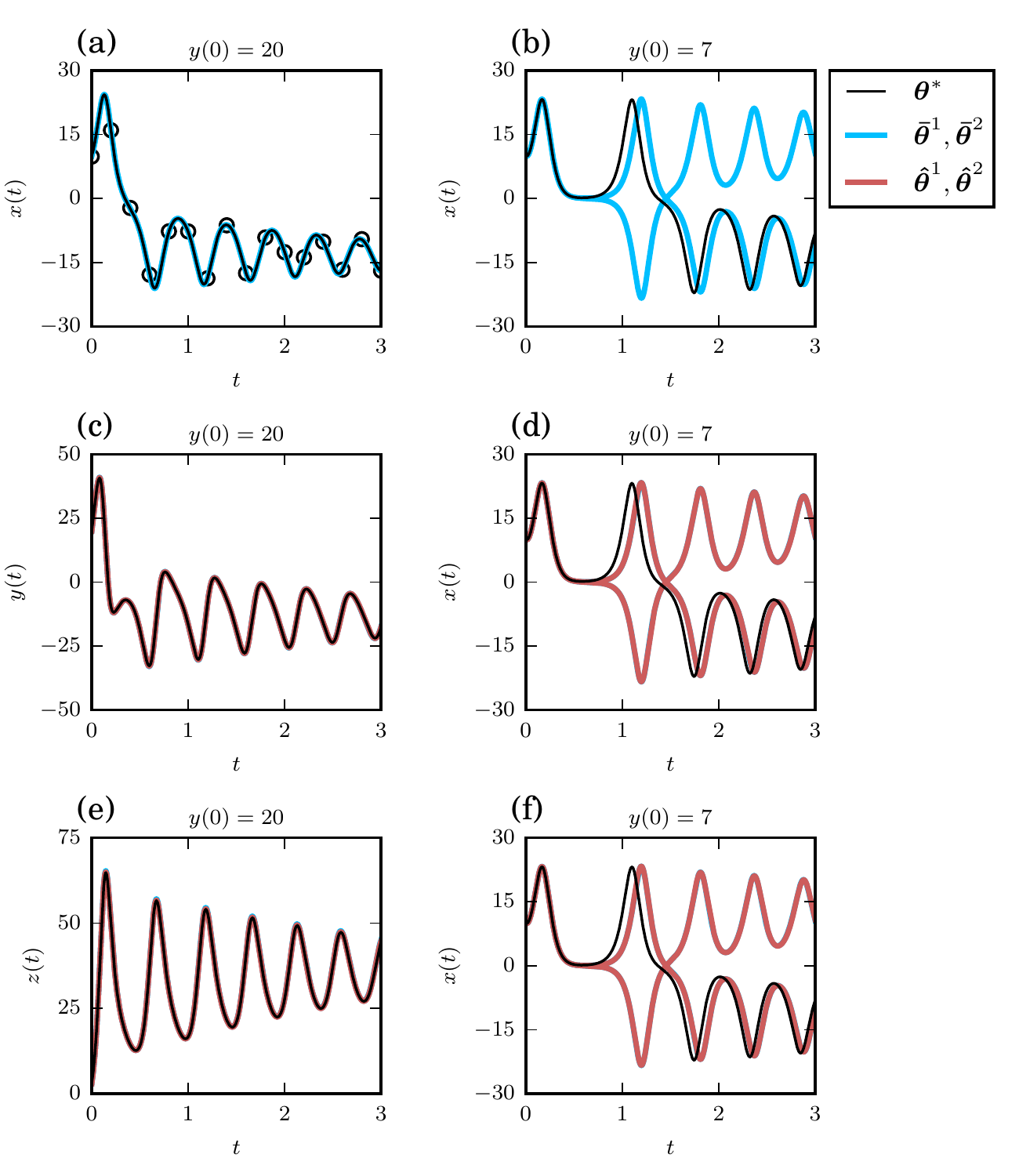}
\caption{Prediction deviation and estimated experiment impact in the Lorenz system. (a) Circles show observed data $x(t)$ with the initial condition $y(0) = 20$. The black line shows the best fit, and blue lines show the prediction deviation models for the prediction problem in (b), which is to predict $x(t)$ when $y(0) = 7$. Prediction deviation shows that the observations in (a) do not constrain the prediction problem. Panels (c) and (e) show the estimated experiment impact models (red) for candidate experiments in which (respectively) $y(t)$ and $z(t)$ are measured, with $y(0) = 20$. In (d) and (f) are the results on the prediction task. Additional measurements of $y(t)$ and $z(t)$ at $y(0) = 20$ can still leave the predictions of $x(t)$ at $y(0) = 7$ unconstrained.} \label{S1}
\end{figure}

\begin{figure}
\centering
\includegraphics[]{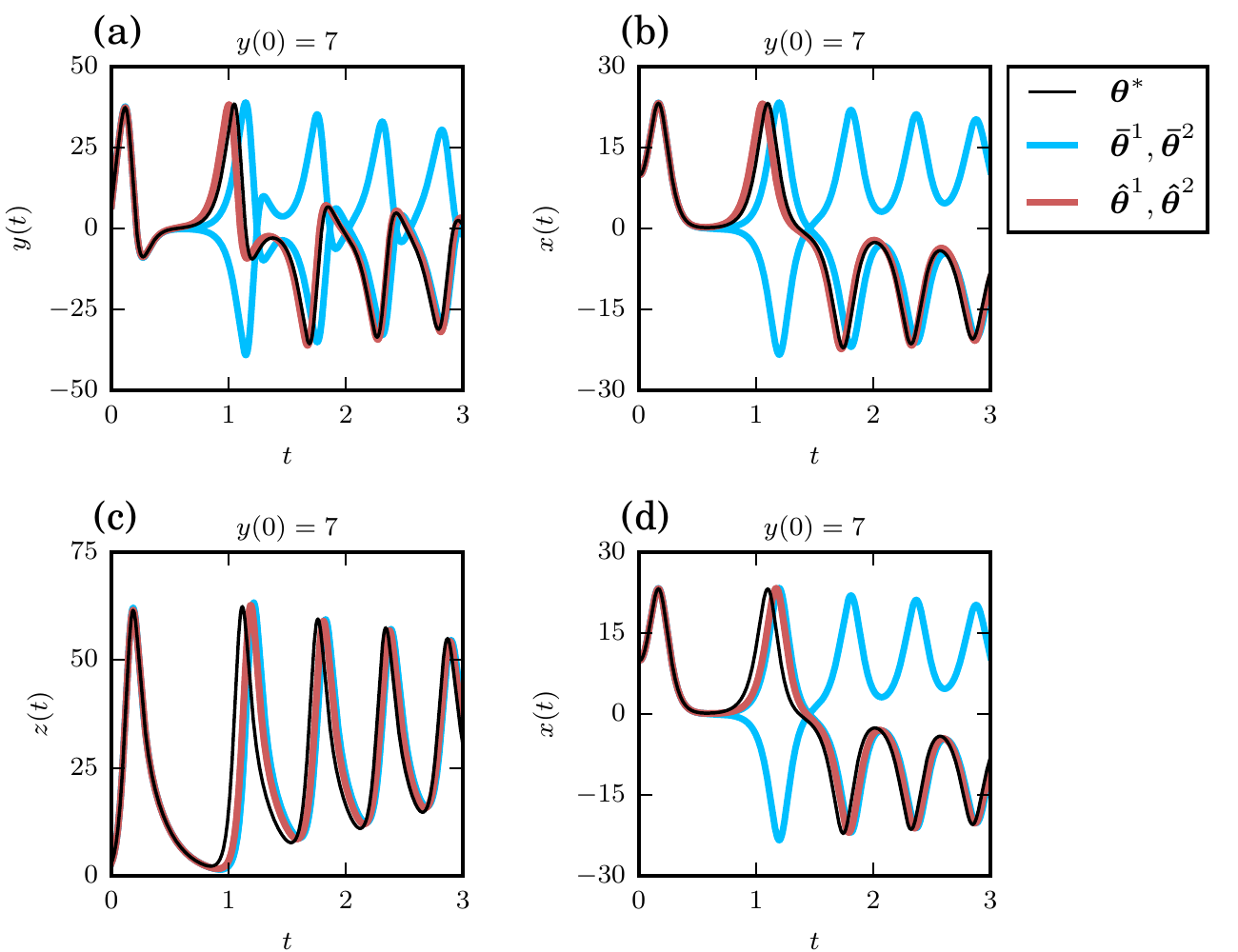}
\caption{Additional candidate experiments for the observations in Fig. \ref{S1}(a) and prediction problem in Fig. \ref{S1}(b). Observing either $y(t)$ at $y(0) = 7$ (a) or $z(t)$ at $y(0) = 7$ (c) is sufficient to constrain predictions of $x(t)$ at $y(0) = 7$. Panels (b) and (d) show the estimated experiment impact models (red) corresponding to (a) and (c), respectively.}
\end{figure}

\end{document}